\newtheorem{lemma}{Lemma}[section]
\newtheorem{proposition}[lemma]{Proposition}
\newtheorem{corollary}[lemma]{Corollary}
\newtheorem{theorem}[lemma]{Theorem}
\newtheorem{conjecture}[lemma]{Conjecture}
\theoremstyle{definition}
\newtheorem{remark}[lemma]{Remark}
\let\oldnl\nl
\newcommand{\nonl}{\renewcommand{\nl}{\let\nl\oldnl}}
\def\lc{\left\lceil}   
\def\rc{\right\rceil}
\title{%
  The simultaneous conjugacy problem \\ in the symmetric group
}
\author[a,b]{Andrej Brodnik\thanks{%
    This work is sponsored in part by the Slovenian Research Agency (research program P2-0359 and research projects J1-2481, J2-2504, and N2-0171).
  }
}
\author[a,c,d]{Aleksander Malni\v{c}\thanks{%
    This work is sponsored in part by the Slovenian
    Research Agency
    (research program P1-0285 and
    research projects N1-0062, J1-9108, J1-9110, J1-9187, J1-1694, J1-1695).
  }
}
\author[a]{Rok Po\v{z}ar\thanks{%
    Corresponding author. This work is sponsored in part by the Slovenian
    Research Agency
    (research program P1-0404 and
    research projects N1-0062, J1-9110, J1-9187, J1-1694).
  }
}
\affil[a]{University of Primorska, IAM/FAMNIT}
\affil[b]{University of Ljubljana, FRI}
\affil[c]{University of Ljubljana, PEF}
\affil[d]{IMFM}
\begin{document}

\maketitle

\begin{abstract}
The transitive simultaneous conjugacy problem asks whether there exists a permutation $\tau \in S_n$ such that $b_j = \tau^{-1} a_j \tau$ holds for all $j = 1,2, \ldots, d$, where
$a_1, a_2, \ldots, a_d$ and $b_1, b_2, \ldots, b_d$ are given sequences of $d$ permutations in $S_n$, each of which
generates a transitive subgroup of $S_n$.
As from mid 70' it has been known that the problem can be solved in $O(dn^2)$ time. An algorithm with
running time $O(dn \log(dn))$, proposed in late 80', does not work correctly on all input data. In this paper we solve  the transitive simultaneous conjugacy problem in $O(n^2 \log d / \log n + dn\log n)$ time and $O(n^{3/ 2} + dn)$ space. Experimental evaluation on random instances shows that the expected running time of our algorithm is considerably better, perhaps even nearly linear in $n$ at given $d$.

\medskip
\noindent
\textbf{Keywords}:
divide-and-conquer,
fast multiplication algorithm,
graph isomorphism,
permutation multiplication,
simultaneous conjugacy problem,
truncated iteration.

\noindent
\textbf{MCS}: 05C85, 05C60.

\end{abstract}

\section{Introduction}

The {\em $d$-Simultaneous conjugacy problem} (or $d$-SCP for short)  in a group $G$ asks the following: given two
ordered $d$-tuples $(a_1, a_2,\ldots , a_d)$ and $(b_1, b_2,\ldots , b_d)$ of elements from $G$, 
does there exists an element $\tau$ in $G$ such that $b_j = \tau^{-1}a_j \tau$ holds for all indices 
$j = 1,2,\ldots, d$?

This  problem  arises naturally in various fields of mathematics, computer science, and their applications in numerous
 other fields such as biology, chemistry etc. -- most notably when
deciding  whether two objects from a given class of objects are isomorphic (which usually means structural
equivalence). We quickly review some easy examples from the theory of maps on surfaces, the theory of 
covering graphs, computational group theory,  representation theory, interconnection networks, and cryptography.
We refer the reader to relevant literature for more information on these topics \cite{Eick, braid, MNS00, MNS02, Seress, Srid89, Oruc}.

A finite graph embedded in a closed oriented surface is conveniently given in terms 
of two permutations $R$ and $L$ on the set of arcs in the graph, where $R$ represents local rotations of arcs around 
each vertex and $L$ is the arc-reversing involution. It is generally known that two embeddings of the same graph 
given by $(R,L)$ and $(R',L')$ are combinatorially equivalent if and only if there exists a permutation $\tau$ of the 
arc set such that $R' = \tau^{-1} R \tau$ and $L' = \tau^{-1} L \tau$. Thus, the problem translates to a $2$-SCP
 in the symmetric group $S_n$.

An $n$-fold covering projection of graphs can be given by assigning to each arc of the base graph a certain 
permutation from  the symmetric group $S_n$, the so-called voltage of the arc; voltage assignments to arcs
 naturally extend to all walks, and in particular, to all fundamental closed walks at a base vertex 
 which generate the fundamental group.  Now, two covering projections given by two voltage assignments are
  equivalent  (which roughly means that the covering graphs are the same modulo relabeling of vertices and arcs)
  if and only there is a permutation in $S_n$ that simultaneously conjugates the two  ordered tuples of 
  voltages assigned to fundamental closed walks at a given base vertex. Thus, the problem translates to 
  $d$-SCP in $S_n$, where $d$ is the Betti number of the graph.

In computational group theory,  an important variant  of  SCP in  $S_n$  occurs when 
the two tuples are equal, that is, $a_j = b_j$ for all $j = 1,2,\ldots, d$; in this case we ask whether there exists a non-trivial
element $\tau\in S_n$ that belongs to the centralizer of the permutation group generated by $a_1, a_2, \ldots, a_d$.
   Often we are interested  in  actual construction of  such permutations  $\tau$, if they exsist.
 For, this is encountered in computing  the generators of   automorphism groups of embedded graphs
 on surfaces.

 In representation theory, the problem of whether two representations of a given group 
    on a vector space are equivalent translates to the simultaneous conjugacy problem  in the general linear 
   group, where the two tuples are  matrices representing the generators of the group. 
   
 As for problems encountered in computer science,  the equivalence of two permutation networks 
  under the permutation of inputs    and outputs also translates to SCP in the symmetric group. Last but not least,     
     SCP  in  braid groups  has been used in  attacking
    cryptographic protocols.

The paper aims  to study $d$-SCP in the symmetric  group $S_n$.
For $d=1$ the problem translates  to the question of whether two permutations have the same cycle type;  this can be tested
  in $O(n)$ time. So, the problem is interesting if $d> 1$. The condition that, for each $j=1,2, \ldots, d$, the
   permutations  $a_j$ and $b_j$ have the same cycle type is clearly necessary for the two tuples to be
     simultaneously conjugated. However, it is not sufficient.

\subsection{Related work and motivation}
Starting with a  purely group-theoretic context of finding centralizers,  in 1977,
Fontet  gave   the  $O(dn^2)$-time algorithm for solving  the $d$-SCP  in $S_n$
     by translating it into a  special case of a colored digraph isomorphism problem \cite{Fontet}. 
     More precisely, let $G_a$ be an arc-colored digraph  with the vertex set $ \{1, 2,\ldots , n\}$ where 
      there is an arc $(u,v)$ colored $j$ if and only if $a_j(u) = v$,  $j=1,2,\ldots, d$. (Note that such a digraph can have multiple parallel arcs and loops.)  The arc-colored
       digraph $G_b$ is defined similarly on the vertex set $ \{1, 2,\ldots , n\}$ with
        an arc $(u,v)$ colored $j$ if and only if $b_j(u) = v$,  $j=1,2,\ldots, d$.  The permutation $\tau$ 
        from $S_n$ that simultaneously conjugates the two tuples  is now precisely a color and direction
         preserving digraph isomorphism from  $G_a$ onto $G_b$ (assuming that permutations in $S_n$ are multiplied from left to right). Fontet's algorithm was independently discovered by Hoffmann in 1982 \cite{Hoff82}.

In studying permutation networks, Sridhar  exhibited an $O(dn\,\mathrm{log}(dn))$-time algorithm  
 in  1989 \cite{Srid89} for the case in which the corresponding digraphs are strongly connected,
  that is, when the two tuples  generate transitive subgroups in $S_n$. In his solution, he applied techniques
   used by  Hopcroft and Tarjan for testing isomorphism of  planar 3-connected graphs \cite{HT73}.
     The underlying idea of this approach is to compute the automorphism partition  of the disjoint union of
      the two digraphs, that is, the orbits of the corresponding automorphism group. An isomorphism between
       the two digraphs exists if and only if there is an orbit that
contains a vertex from each digraph. In the context of arc-colored digraphs $G_a$ and $G_b$ associated
 with the simultaneous conjugacy problem,
this amounts to finding the orbits of the color and direction preserving automorphism group of the disjoint 
union $G_a + G_b$.  The construction consists of the preprocessing stage and the partition refinement stage. The main problem to be resolved is to find an initial partition  such that its refinement  provides the automorphism partition.  The key observation is that if the automorphism partition is nontrivial, then we must start with a nontrivial initial partition,  otherwise the refinement process returns the trivial partition.

 Unfortunately, as we have recently 
 discovered, Sridhar's algorithm does not work correctly on all inputs,  for example, in certain cases when all cycles of a given color are of the same length. 
Sridhar's preprocessing stage  for the case when all cycles of a given color are of the same length is presented in algorithm {\sc  ArcLabeling} in  Appendix~\ref{append}. It labels each arc of a digraph with a pair of the form $\langle \alpha, \beta \rangle$, where $\alpha$ is one of 0, 1, and 2, and $\beta$ is a nonnegative integer.  Assigning $\alpha = 0$ to the arcs  coloured 1 (which defines reference cycles), let other arcs receive $\alpha = 1$ if they join vertices on the same  reference cycle, and $\alpha = 2$ otherwise; the parameter $\beta$ records  ``relative jumps''  with respect to reference cycles,  where arcs of the reference cycles receive $\beta = 0$. Two arcs of the same color are then in the same cell of the initial partition if and only if they have the same label. 

 We now give an example on which the algorithm  {\sc  ArcLabeling} returns the trivial  initial partition, although 
the automorphism partition is nontrivial. Let
\begin{align*}
a_1 &= (1,2,3)(4,5,6)(7,8,9)(10,11,12)\\
a_2 &= (1,11)(2,4)(5,7)(8,10)(3,9)(6,12)
\end{align*} 
be permutations in $S_{12}$, and consider the corresponding arc-colored digraph $G_a$, which is  shown in Figure~\ref{fig:permut}.
It is obvious that by  calling  {\sc  ArcLabeling} on $G_a$,  arcs of color 1 (solid line) receive label $\langle 0, 0\rangle$ while   arcs of color 2 (dashed line) receive label $\langle 2, 0\rangle$. Consequently, the initial partition on the set of arcs of $G_a$ is trivial, which in turn results in the trivial automorphism partition of $G_a$. However, 
 the automorphism partition of $G_a$ is  actually  $\{  \{1, 4, 7, 10 \}, \{ 2, 5, 8, 11\},\{ 3, 6, 9, 12\} \}$, which is  nontrivial.
 
\begin{figure}[htpb]
\begin{picture}(300,200)
\put(120,10){\includegraphics[scale=0.5]{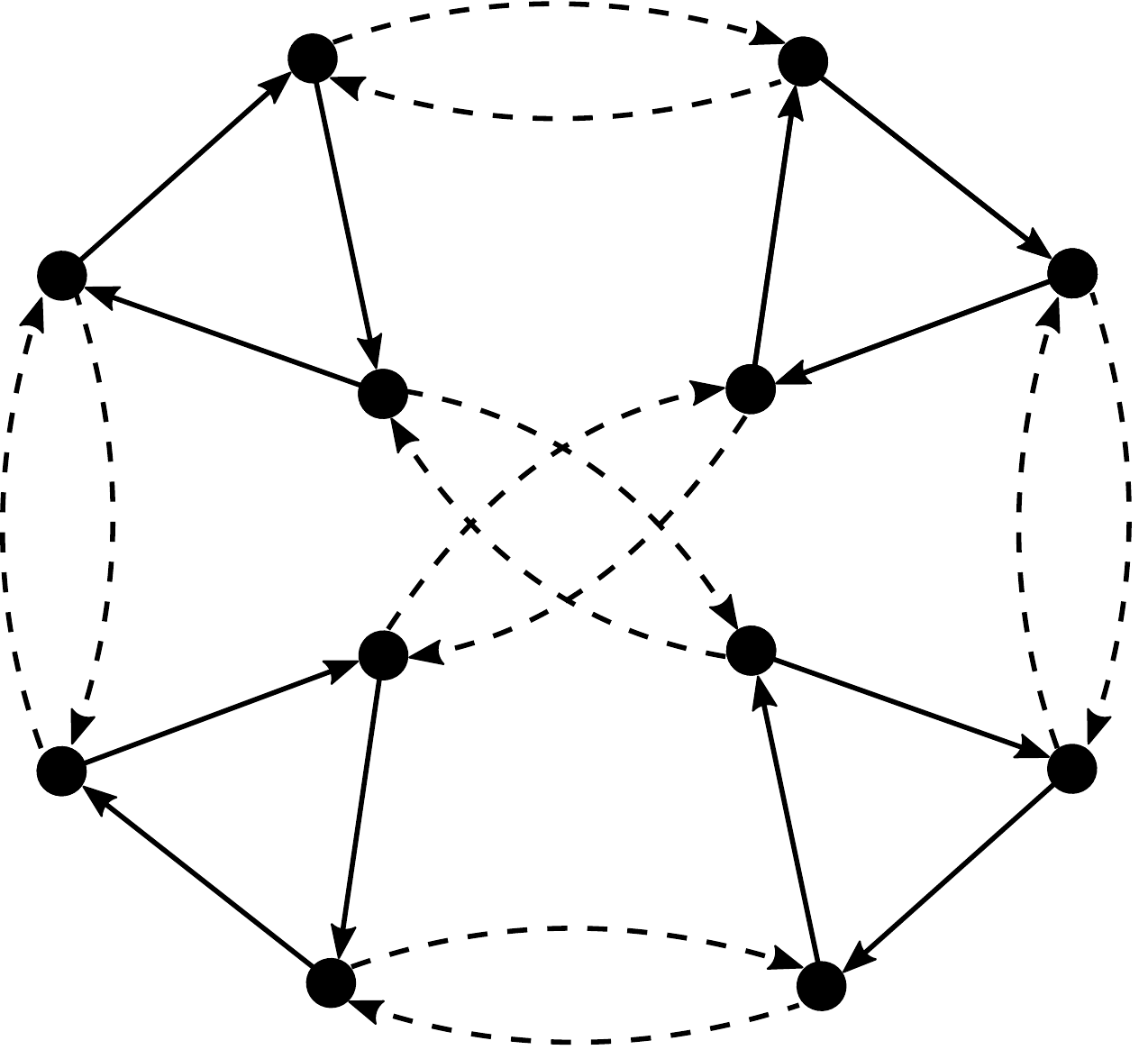}}
\put(155,173){$11$}
\put(295,140){$2$}
\put(231,123){$3$}
\put(231,60){$6$}
\put(185, 60){$9$}
\put(185, 123){$12$}
\put(115,140){$10$}
\put(121,42){$8$}
\put(294,42){$4$}
\put(253,173){$1$}
\put(165,8){$7$}
\put(255,8){$5$}
\end{picture}
\caption{An arc-colored digraph with a nontrivial automorphism partition.}\label{fig:permut}
\end{figure}

\subsection{Our results}
The question that arises naturally is, therefore, the following: 

\begin{quote}\it
Given two
ordered $d$-tuples $(a_1, a_2,\ldots , a_d)$ and $(b_1, b_2,\ldots , b_d)$ of elements from the symmetric group $S_n$, 
can we find  an element $\tau \in S_n$ such that $b_j = \tau^{-1}a_j \tau$ holds for all indices 
$j = 1,2,\ldots, d$ in time   $o(dn^2)$?
\end{quote}
The following main result answers the question affirmatively subject to the  condition that the two given tuples of permutations generate transitive permutation groups. We call this restricted problem the {\em transitive} $d$-SCP.
\begin{theorem}\label{thm:main}
The transitive $d$-SCP in the symmetric group $S_n$ can be solved   in $O(n^2  \log d /  \log n + dn\log n)$  time, using  $O(n^{3/2} + dn)$ space.
\end{theorem}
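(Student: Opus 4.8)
The plan is to work throughout with the reduction to coloured-digraph isomorphism already set up in the introduction: build $G_a$ and $G_b$ on the vertex set $\{1,\dots,n\}$ with an arc $(u,v)$ of colour $j$ whenever $a_j(u)=v$ (respectively $b_j(u)=v$), so that a solution $\tau$ is exactly a colour- and direction-preserving isomorphism $G_a\to G_b$. The one structural fact that makes the transitive case tractable is this: since $\langle a_1,\dots,a_d\rangle$ is transitive, $G_a$ is strongly connected, so an isomorphism is \emph{uniquely determined by the image of any one fixed vertex}. Fixing the base vertex $1$, there are therefore at most $n$ candidate bijections $\tau_w$ ($w\in\{1,\dots,n\}$), and $\tau_w$, together with its correctness on all tree arcs of a fixed spanning tree of $G_a$, can be produced in $O(n)$ time by running a breadth-first search in $G_a$ from $1$ and in $G_b$ from $w$ in lock-step. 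Verifying one candidate against all $dn$ arcs costs $O(dn)$, which only recovers Fontet's $O(dn^2)$ bound; the whole point is to verify (equivalently, in the automorphism-partition language of Hopcroft--Tarjan and Sridhar, to refine) all $n$ candidates \emph{simultaneously}, much faster.

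First I would normalise the input: sort the $d$ colours by the cycle type of $a_j$ and of $b_j$ and merge duplicated colours, rejecting at once if some $a_j,b_j$ differ in cycle type or if $a_j=a_{j'}$ while $b_j\neq b_{j'}$; this costs $O(dn\log n)$ and accounts for that term. Next I would construct a provably correct initial partition of the arc set of $G_a+G_b$ — this is precisely where Sridhar's {\sc ArcLabeling} fails (the excerpt's $S_{12}$ example), so a new labelling has to be designed and its coarsest equitable refinement shown to coincide with the orbit partition of the colour/direction-preserving automorphism group; then $G_a\cong G_b$ holds iff some cell meets both halves, and a concrete $\tau$ is read off. The remaining cost is the bulk verification/refinement of the $\le n$ candidate permutations, and here I would use divide-and-conquer over the colour set, of depth $O(\log d)$: split the colours into two halves, recursively cut the candidate set down on each half, and recombine. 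Each of the $O(\log d)$ levels is handled by a constant number of bulk operations on the $n\times n$ array formed by the surviving candidate permutations — essentially multiplying that array by the permutation matrices of the colours active at that level — which on a $\Theta(\log n)$-bit word RAM can be done in $O(n^2/\log n)$ time by a Four-Russians-style bit-packing of the equality tests $\tau_w(a_j(v))=b_j(\tau_w(v))$. Summing $O(n^2/\log n)$ over $O(\log d)$ levels gives the $O(n^2\log d/\log n)$ term. For space, $G_a$ and $G_b$ occupy $O(dn)$; the candidate permutations are never all resident at once but streamed in batches of about $\sqrt n$ permutations of $\{1,\dots,n\}$, yielding the $O(n^{3/2})$ working-space bound, while spanning trees, a generating set of the (semiregular) centraliser, and the refinement queues all fit in $O(n+dn)$.

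The main obstacle, and the technical heart of the argument, is making the decomposition simultaneously correct and fast. Correctness: a naive divide-and-conquer over the colours fails because a sub-tuple $(a_j)_{j\in C}$ need not generate a transitive group, so the set of permutations conjugating it can be far larger than $n$; the remedy is to always filter \emph{inside} the globally fixed candidate set of size $\le n$ (equivalently, to thread a fixed strongly-connected spanning substructure of $G_a$ through every sub-instance) and to prove that the new initial partition refines to the true automorphism partition even in the degenerate ``all cycles of a given colour equal length'' regime that breaks Sridhar's scheme. Speed: one must show that the per-level combine really costs $O(n^2/\log n)$ rather than $O(n^2)$, i.e.\ that $\Theta(\log n)$ of the equality constraints fit into one machine word and that the $\sqrt n$-batching does not multiply the number of passes. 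Once these two points are established, the rest — BFS, sorting, union-find over the cycles of the generators, and the final read-off of $\tau$ — is routine.
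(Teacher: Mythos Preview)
Your plan departs entirely from the paper's method, and its key step does not work as stated. The paper never repairs Sridhar's partition refinement; instead it fixes a base vertex in $G_a$, maintains a candidate set $V_i\subseteq V(G_b)$ of possible images, and in each round picks some $w\in V_i$ and runs a BFS-based isomorphism test (algorithm {\sc Indistinguishable}) which either succeeds or outputs a \emph{distinguishing word} $\kappa$ of length $O(n)$ over $\{1^{\pm1},\dots,d^{\pm1}\}$ for which the permutations $a_\kappa$ and $b_\kappa$ have different fixed-point sets. Evaluating $a_\kappa,b_\kappa$ on $V_i$ splits it; keeping the smaller part halves $|V_i|$, so $O(\log n)$ rounds suffice, and the $O(dn)$ cost of {\sc Indistinguishable} per round gives the $dn\log n$ term. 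The $n^2\log d/\log n$ term comes from speeding up the evaluation of a length-$m$ word on $s$ points via \emph{word reduction}: iteratively replace the generator set by all pairwise products, halving $|\kappa|$ each time, and truncate after $2^\nu\approx\tfrac12\log_d m$ steps, so the word shrinks to length $O(m\log d/\log m)$ while the precomputed set holds $O(\sqrt m)$ permutations of $[n]$ --- whence the $O(n^{3/2})$ space. With $m=O(n)$ and $s=|V_i|\le n/2^{i-1}$, the partition cost in round $i$ is $O(n^{3/2}+n^2\log d/(2^{i-1}\log n))$, and the geometric sum gives the bound.

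The gap in your scheme is the divide-and-conquer over colours. Filtering the global candidate set of size $\le n$ against a colour subset $C$ costs $\Theta(|C|)$ arc-checks per vertex per surviving candidate, and you offer no mechanism that forces the candidate set to shrink on a proper sub-tuple --- as you yourself note, a sub-tuple need not be transitive, so all $n$ candidates may survive each half. Hence the two halves at any level together cost as much as processing all $d$ colours directly; the recursion has $\Theta(d)$ leaf work, not $\Theta(\log d)$, and your ``$O(n^2/\log n)$ per level over $O(\log d)$ levels'' accounting never pays for the fact that each of the $d$ colours must be tested against $\Theta(n)$ candidates on $\Theta(n)$ vertices. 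The Four-Russians step is likewise unsubstantiated: each test $\tau_w(a_j(v))=b_j(\tau_w(v))$ requires two data-dependent table look-ups returning $\Theta(\log n)$-bit vertex names, and packing $\Theta(\log n)$ such dependent look-ups into $O(1)$ word operations is not a routine tabulation trick. Finally, the $\sqrt n$-batching of candidates does not by itself yield $O(n^{3/2})$ space unless the per-batch computation is shown to run in the claimed time without auxiliary tables larger than $O(n^{3/2})$; in the paper the $n^{3/2}$ arises for a completely different reason (the size of the precomputed product set $\mathcal S_\nu$).
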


Our second result implies that for a large subclass of tuples generating transitive subgroups the problem can be solved in strongly subquadratic time in $n$ at a given $d$. More precisely, we prove the following.

\begin{theorem}\label{th:strongly}
Let  $(a_1, a_2,\ldots , a_d)$ and $(b_1, b_2,\ldots , b_d)$ be $d$-tuples  of permutations  that generate transitive subgroups of the symmetric group $S_n$, and let  $\lambda$ be the minimal number for which there are permutations $a_j$ and $b_j$ in the two tuples that have precisely $\lambda$  cycles in their cycle decompositions. Then we can test whether one tuple is simultaneously  conjugate to the other in $S_n$ in $O((d + \lambda) n  \log n)$ time, using $O((d + \log n)n)$  space.
\end{theorem}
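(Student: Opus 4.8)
The plan is to realise the conjugating $\tau$ as a colour- and direction-preserving isomorphism $G_a\to G_b$, as recalled in the introduction, and to compute the orbit partition of $\Aut(G_a+G_b)$ fast enough by exploiting that one colour has very few cycles. First I would dispose of the trivial ``no'' case: unless $a_j$ and $b_j$ have the same cycle type for every $j$ there is nothing to do, and this is checked in $O(dn)$ time. Fix a colour $j^*$ whose permutations have the minimum number $\lambda$ of cycles (so both $a_{j^*}$ and $b_{j^*}$ have exactly $\lambda$ cycles), and call the $\lambda$ cycles of colour $j^*$ in each of $G_a,G_b$ the \emph{reference cycles}. Any isomorphism $\tau$ sends reference cycles to reference cycles preserving length; moreover, since $\langle a_1,\ldots,a_d\rangle$ is transitive, $\tau$ is \emph{determined by its value on one base vertex} $v$: writing every vertex of $G_a$ as a word in the $a_j$'s applied to $v$ (concretely, fixing a spanning tree $T$ of $G_a$ made of all but one arc of each reference cycle together with $\lambda-1$ connecting arcs), the value $\tau(u)$ is the same word in the $b_j$'s applied to $\tau(v)$. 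Thus the whole problem reduces to finding one \emph{good} vertex $w$ --- one for which the propagation of $v\mapsto w$ along $T$ extends to a valid isomorphism --- or certifying that none exists.

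Next I would cut the candidate set for $\tau(v)$ down from $n$ to essentially $\lambda$ by string matching along the reference cycles. A good $w$ lies on a reference cycle $C'$ of $G_b$ of the same length as the reference cycle $C$ through $v$, and then $\tau|_C$ is the rotation carrying $v$ to $w$. Attach to every vertex $u$ on a reference cycle a constant-size \emph{signature} $\mathrm{sig}(u)$, namely a hash of the lengths of the reference cycles through the $j'$-out- and $j'$-in-neighbour of $u$ over all colours $j'$; these are computed once in $O(dn)$ time. A rotation of $C$ onto $C'$ is feasible only if it matches the cyclic string $\mathrm{sig}(C)$ with $\mathrm{sig}(C')$, and a single linear-time cyclic pattern match returns all feasible rotations. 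Running this over all pairs of equal-length reference cycles costs $O\!\big((\log n)\sum_\ell c_\ell^2\,\ell\big)$, where $c_\ell$ is the number of reference cycles of length $\ell$; since $\sum_\ell c_\ell=\lambda$ and $\sum_\ell c_\ell\ell=n$ we have $\sum_\ell c_\ell^2\ell\le(\max_\ell c_\ell)\sum_\ell c_\ell\ell\le\lambda n$, so this stage runs in $O(\lambda n\log n)$ time and outputs, for each of the at most $\lambda$ reference cycles $C'$ of $G_b$, a set of feasible rotations forming a coset of the subgroup of rotations fixing $\mathrm{sig}(C')$.

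The delicate point --- this is exactly where Sridhar's {\sc ArcLabeling} fails --- is that a feasible rotation need not be good, so one must refine the signatures until their classes along each reference cycle coincide with the orbits of $\Aut(G_b)$ on that cycle. I would run the standard partition-refinement backbone (Hopcroft--Tarjan / Paige--Tarjan) on the colour-refined graph $G_a+G_b$, producing an equitable partition in $O(dn\log n)$ time, and then \emph{promote} it to the true orbit partition by re-running the reference-cycle string matching with the equitable classes as alphabet, iterating the two steps with ``truncated iteration'' --- using precomputed powers $a_{j^*}^{2^k}$, $0\le k\le\lceil\log_2 n\rceil$, to splice in signatures at exponentially growing offsets, so that $O(\log n)$ rounds already see the whole reference cycle. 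Once the partition is automorphism-complete, good-ness is constant on each surviving family (if $\alpha\in\Aut(G_b)$ carries $w$ to $w'$ and $v\mapsto w$ is good, then $\alpha$ composed with that isomorphism witnesses $v\mapsto w'$), so it suffices to verify one representative $w$ per family by a single $O(dn)$ propagate-and-check, and $\tau$ exists iff some verification succeeds. Collecting the costs --- cycle types and the spanning tree $O(dn)$, precomputed powers $O(n\log n)$, the refinement backbone $O(dn\log n)$, the reference-cycle promotion $O(\lambda n\log n)$, the $O(\lambda)$ final verifications $O(dn)$ --- gives time $O((d+\lambda)n\log n)$; the space is $O(dn)$ for the input and the refinement structures plus $O(n\log n)$ for the stored powers and signatures, i.e. $O((d+\log n)n)$.

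I expect the main obstacle to be proving that the reference-cycle promotion, iterated $O(\log n)$ times, really reaches the orbit partition and never merges two vertices lying in different $\Aut(G_b)$-orbits --- that is, that the equivalence ``$v\mapsto w$ propagates consistently'' is faithfully detected by the doubled-offset signatures along the $\lambda$ reference cycles together with the non-reference arcs joining them --- and that each such round can be carried out in $O((d+\lambda)n)$ rather than the straightforward $O(dn\log n)$. This is the structural heart of the argument and precisely the step mishandled by the earlier algorithm.
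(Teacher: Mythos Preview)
Your proposal leaves the decisive step unproven: you yourself flag that showing the iterated ``reference-cycle promotion'' reaches the true $\Aut(G_b)$-orbit partition in $O(\log n)$ rounds of cost $O((d+\lambda)n)$ each is ``the main obstacle'', and you offer no argument for it. This is exactly the place where Sridhar's algorithm goes wrong, and nothing in your sketch explains why doubled-offset signatures should succeed where {\sc ArcLabeling} fails. Even granting a correct orbit partition, your final accounting does not close: the number of $\Aut(G_b)$-orbits among the surviving candidates is not bounded by $\lambda$ (a single reference cycle can carry arbitrarily many orbits), and $O(\lambda)$ propagate-and-check verifications at $O(dn)$ apiece already costs $O(\lambda\, dn)$, which exceeds $O((d+\lambda)n\log n)$ whenever both $d$ and $\lambda$ exceed $\log n$.

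The paper sidesteps orbit partitions entirely and is much simpler. It keeps the {\sc ColorIsomorphic} loop of Section~\ref{sec:initial}: pick any candidate $w$ for the image of the fixed base vertex $v$ and propagate along a spanning tree; either this succeeds, or it returns a \emph{distinguishing word} $\kappa$ whose fixed/support split at least halves the candidate set, so at most $\lfloor\log n\rfloor+1$ rounds suffice. The only new ingredient for Theorem~\ref{th:strongly} is to take $T_a$ to be precisely the tree you describe (all reference arcs but $\lambda$, plus $\lambda-1$ connectors; Lemma~\ref{tree-con}). Then every distinguishing word has the shape $(j^{\pm})^{p_1}\kappa_1(j^{\pm})^{p_2}\cdots\kappa_{m-1}(j^{\pm})^{p_m}$ with $m\le\lambda$, and after a one-time precomputation of $a_j^{\pm 2^k}$, $b_j^{\pm 2^k}$ for $k\le\lfloor\log n\rfloor$ such a word is evaluated on $s$ points in $O(s\lambda\log n)$ time (Lemma~\ref{lemma:eval}). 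Summing the geometric series over the halving rounds gives $O((d+\lambda)n\log n)$ time and $O((d+\log n)n)$ space with no correctness subtleties beyond those already handled in Section~\ref{sec:initial}.
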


The last theorem implies strongly subquadratic time in $n$  as soon as $\lambda = O(n^{\epsilon})$ for some constant $0 \leq \epsilon < 1$.
In particular, for $\lambda =1$ we give a linear time solution. 

\begin{theorem}\label{th:linear}
Let  $(a_1, a_2,\ldots , a_d)$ and $(b_1, b_2,\ldots , b_d)$ be $d$-tuples  of permutations so that, for some $j\in [n]$, the permutations $a_j$ and $b_j$ are $n$-cycles. Then we can test whether one tuple is simultaneously  conjugate in $S_n$ to  the other one in $O(d n)$ time and $O(d n)$ space.
\end{theorem}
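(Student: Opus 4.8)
The plan is to reduce the whole search for a conjugator to a single cyclic string‑matching instance. Fix an index $j$ with $a_j$ and $b_j$ both $n$-cycles (if $b_j$ is not an $n$-cycle, report immediately that no conjugator exists). Choose relabelings $\rho,\sigma\in S_n$, each computable in $O(n)$ time by walking the relevant cycle, with $\rho^{-1}a_j\rho=\sigma^{-1}b_j\sigma=c$, where $c=(1,2,\ldots,n)$ is the standard $n$-cycle. Putting $\tilde a_i=\rho^{-1}a_i\rho$ and $\tilde b_i=\sigma^{-1}b_i\sigma$, the original tuples are simultaneously conjugate iff there is a $\mu\in S_n$ with $\mu^{-1}\tilde a_i\mu=\tilde b_i$ for all $i$ (take $\mu=\rho^{-1}\tau\sigma$), and now $\tilde a_j=\tilde b_j=c$, so any such $\mu$ lies in the centralizer $C_{S_n}(c)=\langle c\rangle$. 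Hence $\mu=c^{k}$ for some $k\in\{0,1,\ldots,n-1\}$, and the problem becomes: is there a shift $k$ with $c^{-k}\tilde a_i c^{k}=\tilde b_i$ for every $i$?

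Identify the point set with $\ZZ_n$ so that $c\colon x\mapsto x+1$; conjugation by $c^{k}$ then shifts a permutation, $(c^{-k}\tilde a_i c^{k})(x)=\tilde a_i(x-k)+k$ (arithmetic modulo $n$). Setting $F_i(x)=\tilde a_i(x)-x$ and $G_i(x)=\tilde b_i(x)-x$, the requirement $c^{-k}\tilde a_i c^{k}=\tilde b_i$ is exactly $G_i(x)=F_i(x-k)$ for all $x$, i.e. the length-$n$ sequence $G_i$ is the cyclic shift of $F_i$ by $k$. Bundling the colors, form the vector sequences $\mathbf F(x)=(F_1(x),\ldots,F_d(x))$ and $\mathbf G(x)=(G_1(x),\ldots,G_d(x))$ over the alphabet $\ZZ_n^{\,d}$; we want the set $K$ of $k$ for which $\mathbf G$ is the cyclic shift of $\mathbf F$ by $k$. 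Replace the alphabet by integers in $\{0,\ldots,n-1\}$ by ranking the $n$ vectors $\mathbf F(x),\mathbf G(x)$ with a $d$-pass radix (counting) sort, which costs $O(dn)$ time and $O(dn)$ space; then compute $K$ in $O(n)$ time by searching for the pattern $\mathbf G$ in the text $\mathbf F\mathbf F$ (the concatenation of $\mathbf F$ with its first $n-1$ symbols) using any linear-time matcher such as Knuth--Morris--Pratt or the $Z$-algorithm, the matching positions being exactly the elements of $K$ up to a fixed sign. If $K=\emptyset$ there is no conjugator; otherwise any $k\in K$ gives the conjugator $\tau=\rho\,c^{k}\,\sigma^{-1}$ in the original labels.

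Correctness follows from the displayed equivalences: every simultaneous conjugator of the original tuples corresponds to some $c^{k}$ with $k\in K$, and conversely. For the complexity, building the $\tilde a_i,\tilde b_i$ and the sequences $F_i,G_i$ costs $O(n)$ per color, hence $O(dn)$ in total; the radix sort contributes $O(dn)$ and the string search $O(n)$; all data (the $2d$ permutations, the vector sequences, the counting-sort buckets) fit in $O(dn)$ space. This also sharpens the $\lambda=1$ case of Theorem~\ref{th:strongly} by removing its logarithmic factor. The only delicate points are getting the direction of the shift right under the left-to-right multiplication convention — a routine sign check, harmless because $k$ ranges over all of $\ZZ_n$ — and carrying out the alphabet reduction of the $d$-dimensional symbols without an extra $\log n$ factor, which is precisely why one must use radix/counting sort rather than a comparison-based sort.
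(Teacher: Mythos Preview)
Your proof is correct and follows essentially the same approach as the paper: encode each vertex $x$ by the vector of ``jumps'' $(a_i(x)-x)\bmod n$ along the reference $n$-cycle, so that simultaneous conjugacy reduces to cyclic equivalence of two length-$n$ sequences, which is then decided in linear time by Knuth--Morris--Pratt. The only cosmetic difference is that the paper flattens the $d$-tuples into a single length-$dn$ string over $\{0,1,\ldots,n\}$ and runs KMP directly (so your radix-sort alphabet reduction, while correct, is not actually needed).
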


The structure of the paper is the following. Section~\ref{sec:prelim} contains the necessary notation and basic definitions  to make the paper self-contained. In Section~\ref{sec:initial} we develop an algorithm which solves  the  transitive $d$-SCP in the symmetric group $S_n$ in $O(n^{2} + dn \log n)$ time and $O(dn)$ space.
In Sections~\ref{sec:shorter} and \ref{sec:storngly} we then give two improvements of  the basic algorithm from Section~\ref{sec:initial}  to prove  Theorems~\ref{thm:main} and \ref{th:strongly}, respectively. 
Theorem~\ref{th:linear} is proven in Section~\ref{linear} by  a completely different approach.
Section~\ref{sec:empir} presents some
empirical results. The last section  concludes the paper by discussing  some open problems related to our results.

\section{Preliminaries} \label{sec:prelim}

This section aims to establish some notation and terminology used in this paper. For the concepts not defined here see \cite{Diestel}.
The notations are quite graph centered since in our solution to the transitive simultaneous conjugacy problem we use graph-based techniques in the spirit of Fontet and Hoffmann.

For a positive integer $n$, we denote the  set $ \{1, 2,\ldots , n\}$ by $[n]$. For $i\in [n]$ and $g\in S_n$, we write $i^g$ for the image of $g$ under the permutation $g$ rather than by the more usual $g(i)$.
 Further, 
let $a = (a_1,a_2,\ldots, a_d)$ be a $d$-tuple of permutations in $S_n$. 
The {\bf permutation digraph} of $a$ is a pair $G_a = (V, A)$, where $V(G_a)= V = [n]$ is  the set of {\bf vertices}, and $A(G_a) = A$ is the set of ordered pairs $(i, a_k)$, $i \in [n], k\in [d]$,  called {\bf arcs}.
The {\bf degree} of $G_a$ is $|a|$. An arc $e = (i, a_k)$ has its {\bf initial vertex} $\textrm{ini}(e) = i$,   {\bf terminal vertex}  $\textrm{ter}(e) = i^{a_k}$, and {\bf color} $c(e) =k$; the arc $e = (i, a_k)$  is also referred  to as {\bf outgoing} from the vertex $i$. The vertices $\textrm{ini}(e)$ and  $\textrm{ter}(e)$ are {\bf end-vertices} of $e$. 


A {\bf  walk}  from a vertex $v_0$ to a vertex $v_m$ in a permutation digraph $G_a$ is an alternating   sequence  $W= v_0, e_1, v_1, e_2, \ldots, e_m, v_{m}$ of  vertices and arcs from $G$ such that for each  $i\in [m]$, the vertices $v_{i-1}$ and $v_{i}$ are the end-vertices of the arc $e_i$.  The vertices $v_0$ and $v_m$ are called the {\bf initial vertex} and the {\bf terminal vertex} of $W$, respectively. If  $v_0 = v_{m}$,  then the walk $W$ is  {\bf closed}, and it is {\bf open} otherwise.  A {\bf path} is a walk whose  vertices (and
hence arcs) are all pairwise distinct.

Walks defined so far are undirected. On the other hand we have also a  {\bf directed walk} in which case $v_{i-1} = \textrm{ini}(e_i)$ and  $v_{i} = \textrm{ter}(e_i)$ for each  $ i \in [m]$. A digraph  $G_a$  is {\bf weakly connected} if any vertex is reachable from any other vertex by traversing a  walk, and is {\bf strongly connected} if any vertex is reachable from any other vertex by traversing a directed walk.
 Clearly,  $G_a$  is strongly connected if and only if the tuple $a$ generates a transitive subgroup of $S_n$. 

A {\bf subdigraph} $H$ of $G_a$ consists of a subset $V(H) \subseteq V(G_a)$  and a subset $A(H) \subseteq A(G_a)$ such that every arc in  $A(H)$ has both end-vertices in $V(H)$.  A walk in a subdigraph $H$ of $G_a$ is a walk in $G_a$ consisting only of arcs from $A(H)$. 
A subdigraph $T$ of $G_a$ is a {\bf tree} if for any two vertices $v$ and $v'$ in $T$, there is a unique path from $v$ to $v'$ in $T$. If $V(T) = V(G_a)$, then $T$ is a {\bf spanning tree} of $G_a$.

A {\bf color-isomorphism} between two permutation digraphs $G_a$ and $G_b$  is a pair $(\phi_V, \phi_A )$ of bijections, where $\phi_V \colon V(G_a) \to V(G_b)$ and $\phi_A \colon A(G_a) \to A(G_b)$ such that $\phi_V(\textrm{ini}(e)) = \textrm{ini}(\phi_A(e))$, $\phi_V(\textrm{ter}(e)) = \textrm{ter}(\phi_A(e))$ and $c(e) = c(\phi_A(e))$ for any arc $e\in A(V_a)$.  

A {\bf partition} $\mathcal{P}$ of a set $V$ is a set $\mathcal{P} = \{V_1, V_2, \ldots,V_m\}$ of nonempty pairwise disjoint subsets of $V$ whose union is $V$. Each set $V_i$
is called a {\bf cell} of the partition. A partition is
{\bf trivial}  if it has only one cell, $\{ V\}$.

Let $\mathcal{G}$ be the set of all strongly connected  permutation digraphs on $n$ vertices, each of degree $d$,  and let $\mathcal{G}_V = \{(G_a,v) \,:\, G_a\in \mathcal{G}, v \in V(G_a)\}$ be the set of all rooted digraphs from $\mathcal{G}$. Given a set  $\mathcal{L}$ of labels, a {\bf vertex invariant} is a function  $I \colon \mathcal{G}_V \to \mathcal{L}$ 
such that whenever there exists a color-isomorphism  $G_a \to G_b$ mapping $v$ onto $w$, then
$$I(G_a,v) = I(G_b, w).$$
Since a vertex invariant $I$ assigns a label to every vertex of a permutation digraph $G_a$, it  induces a partition $\mathcal{P}_{I}(G_a) = \{V_1(G_a), V_2(G_a),\ldots,V_m(G_a)\}$ on the set $V(G_a)$ such that for any two vertices $u,v\in V(G_a)$, $u$ and $v$ are in the same cell if and only if they have the same labels $I(G_a,u) = I(G_a, v)$.   Two partitions $\mathcal{P}_{I}(G_a)$ and $\mathcal{P}_{I}(G_b)$  are {\bf compatible}  if $|\mathcal{P}_{I}(G_a)| = |\mathcal{P}_{I}(G_b)| = \mu$, and the cells of $\mathcal{P}_{I}(G_b)$ can be re-enumerated such that for  all $i\in[\mu]$ we have that $|V_i(G_a)| = |V_i(G_b)|$ and $I(G_a,v) = I(G_b, w)$ for all $v\in V_i(G_a)$ and $w\in V_i(G_b)$. 
The following results follow directly from  definitions.
\begin{lemma}\label{compatible}
Let $I \colon \mathcal{G}_V \to \mathcal{L}$ be a vertex invariant, and let $G_a, G_b\in \mathcal{G}$. Then:
\begin{enumerate}
\item[(i)] If $G_a$ and $G_b$ are color-isomorphic, then the partitions  $\mathcal{P}_{I}(G_a)$ and $\mathcal{P}_{I}(G_b)$  are compatible.
\item[(ii)]  Let $\mathcal{P}_{I}(G_a)$ and $\mathcal{P}_{I}(G_b)$ be compatible partitions, and let $v\in V_1(G_a)$. Then $G_a$ and $G_b$ are not color-isomorphic if and only if  there is no color-isomorphism of $G_a$ onto $G_b$ mapping $v$ to some vertex in $V_1(G_b)$. \end{enumerate}
\end{lemma}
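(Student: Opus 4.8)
The plan is a direct unwinding of the three notions involved --- vertex invariant, the partition $\mathcal{P}_I$ it induces, and compatibility of partitions --- with no extra machinery.

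\emph{Part (i).} I would start from a color-isomorphism $(\phi_V,\phi_A)\colon G_a\to G_b$. For every vertex $v\in V(G_a)$ this is, in particular, a color-isomorphism mapping $v$ onto $\phi_V(v)$, so the defining property of the vertex invariant $I$ gives $I(G_a,v)=I(G_b,\phi_V(v))$. Since $\phi_V$ is a bijection, this single identity already forces all three requirements in the definition of compatibility. First, the set $\{I(G_a,v):v\in V(G_a)\}$ of labels realized by $G_a$ coincides with the set realized by $G_b$, so $|\mathcal{P}_I(G_a)|=|\mathcal{P}_I(G_b)|=:\mu$. Second, enumerating the cells of both partitions according to the common list $\ell_1,\dots,\ell_\mu$ of labels, a vertex $v$ lies in $V_i(G_a)$ precisely when $\phi_V(v)$ lies in $V_i(G_b)$, so $\phi_V$ restricts to a bijection $V_i(G_a)\to V_i(G_b)$ and hence $|V_i(G_a)|=|V_i(G_b)|$. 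Third, by construction every vertex of $V_i(G_a)$ and every vertex of $V_i(G_b)$ carries the label $\ell_i$.

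\emph{Part (ii).} The ``only if'' direction is immediate: if $G_a$ and $G_b$ admit no color-isomorphism whatsoever, then in particular none maps $v$ to a vertex of $V_1(G_b)$. For the ``if'' direction I would argue by contraposition. Suppose $(\phi_V,\phi_A)\colon G_a\to G_b$ is a color-isomorphism. As in part (i), $I(G_b,\phi_V(v))=I(G_a,v)$, and since $v\in V_1(G_a)$ this common value is the label attached to the cell $V_1(G_a)$. Appealing to compatibility --- with the matching re-enumeration of the cells of $\mathcal{P}_I(G_b)$ --- this value is exactly the label attached to $V_1(G_b)$, so $\phi_V(v)\in V_1(G_b)$. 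Thus $(\phi_V,\phi_A)$ is a color-isomorphism of $G_a$ onto $G_b$ sending $v$ into $V_1(G_b)$, contradicting the hypothesis.

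The one place that needs slight care --- essentially the only ``obstacle'' --- is the bookkeeping of the re-enumeration of cells guaranteed by compatibility: one has to make sure that the cell indexed $1$ on the $G_a$-side and the cell indexed $1$ on the $G_b$-side carry the same invariant value, which is precisely what the definition of compatibility provides. Beyond that, the argument is nothing more than repeated use of the fact that a color-isomorphism preserves the values of $I$.
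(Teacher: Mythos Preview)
Your proof is correct and matches the paper's approach: the paper itself offers no detailed argument, simply remarking that both parts ``follow directly from definitions,'' which is precisely the direct unwinding of vertex invariant, induced partition, and compatibility that you carry out.
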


An {\bf alphabet} $\Sigma$ is a finite set of  {\bf characters}. 
A {\bf word} (or a {\bf string}) over the alphabet $\Sigma$ is a finite sequence of characters from $\Sigma$.  
The {\bf length} $|\kappa|$ of a word $\kappa$ is the number of characters in  $\kappa$. 
The unique word $\epsilon$ of length 0 is the {\bf empty} word. We denote the set of all words of length $m$ over the alphabet $\Sigma$ by $\Sigma^m$.  We write $\kappa\kappa'$ for the concatenation of words $\kappa$ and $\kappa'$.
For a word $\kappa$ and a non-negative integer $m$  we define ${\kappa}^m = \epsilon$ for $m=0$, and otherwise ${\kappa}^m = {\kappa}^{m-1} \kappa$. 
Two words $\kappa$ and $\kappa'$ are {\bf cyclically equivalent} if there exist words $x$ and $y$ so that $\kappa = xy$ and $\kappa' = yx$.

 Let $G_a$ be a permutation digraph of degree  $d$ and let $\Sigma = \{k^{\alpha} \,|\, k \in [d], \alpha = \pm 1\}$ be an alphabet. 
A walk $W= v_0, e_1, v_1, e_2, \ldots, e_m, v_{m}$  in $G_a$   {\bf defines}   the word $\kappa(W) = k_1^{\alpha_1}k_2^{\alpha_2}\cdots k_m^{\alpha_m}$ over $\Sigma$, where: (i) $k_i = c(e_i)$; and (ii), if $\textrm{ini}(e_i) = v_{i-1}$ and $\textrm{ter}(e_i) = v_{i}$, then  $\alpha_i = +1$, and $\alpha_i = -1$ otherwise.

 Conversely, a word $\kappa'$ over $\Sigma$ and a vertex  $v_0 \in V(G_a)$ together {\bf define} the walk $W(\kappa', v_0)$ in $G_a$ starting at $v_0$ such that $\kappa(W(\kappa', v_0)) = \kappa'$. Finally, for a word $\kappa = k_1^{\alpha_1}k_2^{\alpha_2}\cdots k_m^{\alpha_m}$ over $\Sigma$ we define the permutation $a_{\kappa} = a_{k_1}^{\alpha_1}a_{k_2}^{\alpha_2}\cdots a_{k_m}^{\alpha_m}$.

Let  $ \{{\tt C}, {\tt O}\}$ be a set of labels and let $\kappa$ be a word over $\Sigma$. The function $I^{\kappa} \colon \mathcal{G} \to \{{\tt C}, {\tt O}\}$ defined by 
$$
  I^{\kappa}(G_a,v)=\left\{
  \begin{array}{@{}ll@{}}
    {\tt C}, & \text{if}\   \text{the walk $W(\kappa, v)$ in $G_a$ is closed}\\
    {\tt O}, & \text{otherwise}
  \end{array}\right.
$$ 
is vertex invariant.   In group-theoretical terms, vertices labeled by {\tt C} represent the set of fixed points  of the permutation $a_{\kappa}$, while vertices labeled by {\tt O} represent its  support set.  For a set $B \subseteq V(G)$ we let
 $$O_{\kappa}(B) = \{v \in B\,| \, I^{\kappa}(G,v)= {\tt O}\} \, \text{ and }\, C_\kappa(B) = \{v \in B \,| \, I^{\kappa}(G,v)= {\tt C}\}.$$
 For $\kappa=\epsilon$, we have $I^{\epsilon}(G,v) = {\tt C}$ for all $(G,v) \in \mathcal{G}$, and therefore $C_{\epsilon}(V(G)) = V(G)$ and  $O_{\epsilon}(V(G)) = \emptyset$.

Let $G_a$ and $G_b$ be two permutation digraphs, and 
let $W_1$ and $W_2$ be walks in $G_a$ and $G_b$, respectively. We say that $W_2$ is the walk {\bf corresponding} to the walk $W_1$ if $\kappa(W_1) = \kappa(W_2)$. In particular,  an arc $e = (v, a_k) \in A(G_a)$  {\bf corresponds} to an arc $f= (w, b_{k'})\in A(G_b)$  if $k=k'$.
Further, a word $\kappa$ over $\Sigma$  is a {\bf distinguishing word} for vertices $v \in V(G_a)$ and $w\in V(G_b)$  if $I^{\kappa}(G_a,v) \neq I^{\kappa}(G_b,w)$. In this case,  $v$ and $w$  are said to be {\bf distinguishable}. 

In this paper, we are using a $b$-bit RAM model of computation and consequently, we assume $n, m, d = O(2^b)$.

\section{Basic method}\label{sec:initial}
Let $G_a$ and $G_b$ be strongly connected permutation digraphs on $n$ vertices, each of degree $d$.
In this section, we show that we can decide whether  $G_a$ and $G_b$ are color-isomorphic by an algorithm running in 
$O(n^{2} + dn \log n)$ time and $O(dn)$ space.  We first prove the following theorem.

\begin{theorem}\label{prop:iso}
Let $G_a$ and $G_b$ be strongly connected permutation digraphs on $n$ vertices and of degree $d$, and 
let  $v_0\in V(G_a)$ and $w_0 \in V(G_b)$. Then there exists a color-isomorphism of $G_a$ onto $G_b$ mapping $v_0$ onto $w_0$ if and only if $v_0$ and $w_0$ are indistinguishable.
\end{theorem}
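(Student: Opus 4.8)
The plan is to prove both directions of the equivalence. The forward direction is essentially trivial: if $(\phi_V,\phi_A)\colon G_a\to G_b$ is a color-isomorphism with $\phi_V(v_0)=w_0$, then for every word $\kappa$ over $\Sigma$ the walk $W(\kappa,v_0)$ in $G_a$ maps, arc by arc, to the walk $W(\kappa,w_0)$ in $G_b$; since $\phi_V$ is a bijection it takes the terminal vertex of the first walk to the terminal vertex of the second, so one walk is closed if and only if the other is, i.e. $I^\kappa(G_a,v_0)=I^\kappa(G_b,w_0)$ for all $\kappa$. Hence $v_0$ and $w_0$ admit no distinguishing word and are indistinguishable.

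The substantive direction is the converse: assuming $v_0$ and $w_0$ are indistinguishable, construct a color-isomorphism sending $v_0$ to $w_0$. The natural construction is to define $\phi_V$ by ``following words from the root'': since $G_a$ is strongly connected, every vertex $v\in V(G_a)$ equals $v_0^{\,a_\kappa}$ for some word $\kappa$ (the word read off a directed walk from $v_0$ to $v$), and we set $\phi_V(v) = w_0^{\,b_\kappa}$. The heart of the proof is showing this is well-defined: if $v_0^{\,a_\kappa}=v_0^{\,a_{\kappa'}}$, then $v_0^{\,a_{\kappa\kappa'^{-1}}}=v_0$, so the walk $W(\kappa\kappa'^{-1},v_0)$ is closed, i.e. $I^{\kappa\kappa'^{-1}}(G_a,v_0)={\tt C}$; by indistinguishability $I^{\kappa\kappa'^{-1}}(G_b,w_0)={\tt C}$ as well, which says $w_0^{\,b_{\kappa\kappa'^{-1}}}=w_0$, hence $w_0^{\,b_\kappa}=w_0^{\,b_{\kappa'}}$. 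This well-definedness argument — the observation that indistinguishability on \emph{all} words transfers the full stabilizer-coset structure of $v_0$ to $w_0$ — is the main point, and I expect it to be the only real obstacle; everything else is bookkeeping.

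Once $\phi_V$ is well-defined, the remaining steps are routine. Surjectivity of $\phi_V$ follows from strong connectivity of $G_b$ (every vertex of $G_b$ is $w_0^{\,b_\kappa}$ for some $\kappa$), and injectivity follows by running the well-definedness argument with the roles of $G_a$ and $G_b$ reversed (indistinguishability is symmetric). To get $\phi_A$, define it on an arc $e=(v,a_k)$ by $\phi_A(e) = (\phi_V(v), b_k)$; this is a well-defined bijection $A(G_a)\to A(G_b)$ because $\phi_V$ is, and it manifestly preserves color. Compatibility with $\phi_V$ on initial vertices is immediate, and on terminal vertices it is exactly the defining relation: $\phi_V(v^{a_k}) = \phi_V(v_0^{\,a_{\kappa k}}) = w_0^{\,b_{\kappa k}} = (w_0^{\,b_\kappa})^{b_k} = \phi_V(v)^{b_k} = \textrm{ter}(\phi_A(e))$, where $\kappa$ is any word with $v_0^{\,a_\kappa}=v$. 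Thus $(\phi_V,\phi_A)$ is a color-isomorphism, and by construction $\phi_V(v_0)=w_0^{\,b_\epsilon}=w_0$, completing the proof. A minor technical remark worth stating explicitly in the write-up: the words used here are words on directed walks (all exponents $+1$), but since a general word $\kappa$ and the permutation $a_\kappa$ are well-defined for arbitrary $\pm1$-exponent words, and since $I^\kappa$ is defined for all such words, the argument using $\kappa\kappa'^{-1}$ is legitimate.
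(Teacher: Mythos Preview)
Your proof is correct, and the core idea---indistinguishability forces closed walks at $v_0$ and $w_0$ to correspond, so the map ``follow the same word from the root'' is well-defined---is exactly the paper's. The execution differs slightly: the paper fixes a spanning tree $T_a$ of $G_a$ rooted at $v_0$, defines $\phi_V$ via the unique tree paths, argues that the image $T_b$ must itself be a spanning tree of $G_b$ (else an open walk at $v_0$ would correspond to a closed one at $w_0$), and then checks consistency on cotree arcs (else a closed walk at $v_0$ would correspond to an open one at $w_0$). Your version is the cleaner group-theoretic formulation---you work with arbitrary words and prove well-definedness directly via $\kappa\kappa'^{-1}$---whereas the paper's spanning-tree phrasing is chosen because it translates verbatim into the algorithm \textsc{Indistinguishable} that follows, where the tree supplies the distinguishing word on failure. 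Either argument proves the theorem; the paper's buys an immediate algorithmic reading, yours buys brevity.
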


\begin{proof}
If there exists a color-isomorphism of $G_a$  onto $G_b$  mapping $v_0$ onto $w_0$, then any closed walk is mapped to a closed walk and hence $v_0$ and $w_0$ are indistinguishable.

Let $v_0$ and $w_0$ be indistinguishable vertices of $G_a$  and $G_b$, respectively.
 Construct an isomorphism mapping $v_0$ onto $w_0$ as follows.
 Take a  spanning tree $T_a$ of $G_a$ rooted at $v_0$. For a vertex $v\neq v_0 \in V(T_a)$, let $v_0, \ldots, e^v, v$  be a unique path in $T_a$  from $v_0$ to  $v$, and let  $w_0,\ldots, f^v, w$ be the corresponding path in $G_b$ starting with $w_0$. For each vertex $v\neq v_0  \in V(T_a)$,  map the arc $e^v$ to $f^v$ and the vertex  $v$ to  $w$.
The subdigraph $T_b$ of $G_b$ with the set  $V(G_b)$ of vertices and with the set $\{f^v \in A(G_b)\,|\, v\in V(T_a), v\neq v_0  \}$ of arcs  must be a spanning tree of $G_b$ rooted at $w_0$, for otherwise there exists an open walk starting with $v_0$ such that the corresponding walk  starting with $w_0$ is closed. But this is not possible since $v_0$ and $w_0$ are  indistinguishable. It remains to define the mapping of the cotree arcs.  Let $(u, a_k)$ be a fixed cotree arc of $T_a$, and suppose  that $u$ is mapped to $u'$. Then map   $(u, a_k)$ to the corresponding arc $(u', b_k)$ in $G_b$. If $a_k(u)$ has already been mapped to some vertex different from $b_k(u')$, then there exists a closed walk starting at  $v_0$ such that the corresponding walk starting with $w_0$ is open. But this is not possible. 
Repeating the process at the remaining  cotree arcs yields the desired isomorphism.
\end{proof}

  From Theorem~\ref{prop:iso} we derive the necessary and sufficient condition for $G_a$ and $G_b$ to be color-isomorphic. 
  Since for an arbitrary vertex $v_0 \in V(G_a)$ there are  $n$ possible candidates in $G_b$ that might be indistinguishable from $v_0$,  we need to check, for each vertex $w_0\in V(G_b)$,  whether  $v_0$ and $w_0$ are indistinguishable. 
Algorithm {\sc Indistinguishable}, based on Theorem~\ref{prop:iso}  and breadth-first search, does precisely that -- with an addition of returning the empty word whenever $v_0$ and $w_0$ are indistinguishable, and a distinguishing word for $v$ and $w$ otherwise.

\begin{algorithm}[h!]
\DontPrintSemicolon
\SetNlSty{<text>}{}{:}
\NoCaptionOfAlgo
\SetKw{Comment}{}{}
\SetKwComment{Comment}{ }{}
\SetCommentSty{textrm}
\SetArgSty{text}
\KwIn{Strongly connected permutation digraphs $G_a$ and $G_b$ each of degree $d$ with $n$ vertices,  $v_0\in V(G_a)$ and $w_0\in V(G_b)$.}
\KwOut{The empty word $\epsilon$, if $v_0$ and $w_0$ are indistinguishable, a distinguishing word for $v_0$, and $w_0$ otherwise.}
Construct a spanning tree $T_a$ of $G_a$;\;
$V(T_b) \coloneqq \{w_0\}$; $A(T_b) \coloneqq \emptyset$;\tcp*{\sl a subdigraph of $G_b$}
\lFor(\hfill // {\sl an empty array}){$v \in V(G_a)$}{$\mathcal{D}[v] \coloneqq \text{null}$;}
$\mathcal{D}[v_0] \coloneqq w_0$;\;
$Q \coloneqq \emptyset$;\tcp*{\sl an empty first-in first-out queue}
$Q.\textsc{Enqueue}(v_0)$;\;
\While{$Q \neq \emptyset$}{
$u \coloneqq Q.\textsc{Dequeue}()$;\;
\For{ $e \in A(T_a)$ with $u$ and $v$ as  its end-vertices}{
Let $\mathcal{D}[u], f, v' $ be the path in $G_b$ corresponding to $u, e, v$;\;
\eIf(\hfill // {\sl expand $T_b$} ){$v' \notin V(T_b)$}{
$V(T_b) \coloneqq V(T_b)  \cup \{v'\}$; $A(T_b) \coloneqq A(T_b) \cup \{f\}$; $Q.\textsc{Enqueue}(v)$;  $\mathcal{D}[v] \coloneqq v'$;\;
}
(\hfill // {\sl $v_0$ and $w_0$ are distinguishable} ){
Let $W$ be the closed walk in $G_b$ obtained by following the path in the tree $T_b$ from $w_0$ to $\textrm{ini}(f)$, the edge $f$, and the path in the tree $T_b$ from $\textrm{ter}(f)$ to $w_0$;\; 
\Return $\kappa(W)$;\tcp*{\sl  the word  defined by $W$}
}
}
}
\nonl ~\mbox{}\; \tcp*{\sl $T_B$ is a spanning tree;  need to check if  $\mathcal{D}$ extends to an isomorphism}
\For{ $e\in A(G_a)\setminus A(T_a)$}{
Let $\mathcal{D}[\textrm{ini}(e)], f, v'$ be the path in $G_b$  corresponding to $\textrm{ini}(e), e, \textrm{ter}(e)$;\;
\If(\hfill // {\sl $v_0$ and $w_0$ are distinguishable}){$v' \neq \mathcal{D}[\textrm{ter}(e)]$}
{
Let $W$ be the closed walk in $G_b$ obtained by following the path in $T_b$ from $w_0$ to $\textrm{ini}(f)$, the edge $f$, and the path in $T_b$ from $\textrm{ter}(f)$ to $w_0$;\; 
\Return $\kappa(W)$;\tcp*{\sl  the word  defined by $W$}
}
}

\Return  $\epsilon$;\;

\caption{{\bf Algorithm } {\sc Indistinguishable}\,($G_a, G_b, v_0, w_0$)}
\end{algorithm}

\begin{proposition}\label{prop:Indistinguishable}
Let $G_a$ and $G_b$ be  strongly connected permutation digraphs on $n$ vertices, each of degree $d$, and let $v_0\in V(G_a)$, $w_0\in V(G_b)$. The algorithm {\sc Indistinguishable}$(G_a, G_b, v_0, w_0)$ correctly tests  in  $O(dn)$ time and  $O(dn)$ space whether  $v_0$ and $w_0$ are indistinguishable.
\end{proposition}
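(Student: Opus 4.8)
The plan is to verify that algorithm {\sc Indistinguishable} faithfully implements the constructive argument in the proof of Theorem~\ref{prop:iso}, and then to account for the time and space separately. First I would argue correctness. The algorithm performs a breadth-first traversal of a spanning tree $T_a$ of $G_a$ rooted at $v_0$ (the root being the first vertex enqueued, with $\mathcal{D}[v_0]:=w_0$). For each tree arc $e$ with end-vertices $u$ (already processed) and $v$, it follows the arc in $G_b$ corresponding to $e$ starting from $\mathcal{D}[u]$; this is exactly the step ``map $e^v$ to $f^v$ and $v$ to $w$'' from the proof of Theorem~\ref{prop:iso}. I would establish the loop invariant that, at all times, $\mathcal{D}$ restricted to the vertices visited so far is a partial color-isomorphism, $T_b$ is the image tree, and $\mathcal{D}[u]$ equals the terminal vertex of the walk in $G_b$ defined by $\kappa(P_u)$ where $P_u$ is the tree path from $v_0$ to $u$. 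If at some tree arc the corresponding vertex $v'$ in $G_b$ already lies in $V(T_b)$, the closed walk $W$ built from the two tree paths and $f$ has $\kappa(W)$ defining a closed walk from $v_0$ in $G_a$ (it runs up $T_a$ and back) but an open walk from $w_0$ in $G_b$ (since $T_b$ is a tree and $f$ would create a cycle), so $\kappa(W)$ is a distinguishing word; this is the ``open walk / closed walk'' obstruction from the theorem's proof. After the BFS, $T_b$ is a genuine spanning tree of $G_b$ rooted at $w_0$ (each of the $n-1$ non-root vertices of $G_a$ contributed exactly one arc, and no cycle was ever closed), and the cotree loop checks each arc $e\in A(G_a)\setminus A(T_a)$: if the corresponding arc from $\mathcal{D}[\textrm{ini}(e)]$ does not land on $\mathcal{D}[\textrm{ter}(e)]$, we again exhibit a closed walk in $G_a$ whose corresponding walk in $G_b$ is open, hence a distinguishing word. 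If all cotree arcs check out, $\mathcal{D}$ is a bijection on vertices extending to a color-isomorphism, so no distinguishing word exists and the algorithm returns $\epsilon$; conversely if $v_0,w_0$ are indistinguishable, Theorem~\ref{prop:iso} guarantees a color-isomorphism, hence all checks succeed and $\epsilon$ is returned. Thus the output is $\epsilon$ iff $v_0$ and $w_0$ are indistinguishable, and otherwise a distinguishing word is returned.

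For the space bound: we store $G_a$ and $G_b$ (each $O(dn)$), the tree $T_a$ ($O(n)$ arcs), the subdigraph $T_b$ ($O(n)$), the array $\mathcal{D}$ ($O(n)$), and the FIFO queue ($O(n)$). Producing the returned word $\kappa(W)$ requires writing a walk that goes up $T_b$ to $\textrm{ini}(f)$ and back down from $\textrm{ter}(f)$; since $T_b$ has $n$ vertices, this walk and its word have length $O(n)$. Hence total space is $O(dn)$.

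For the time bound: constructing a spanning tree $T_a$ of $G_a$ is $O(dn)$ (e.g. a BFS/DFS over the $O(dn)$ arcs), and the initialization of $\mathcal{D}$ and the queue is $O(n)$. The main \textbf{while} loop dequeues each of the $n$ vertices exactly once, and for each vertex $u$ it examines the tree arcs incident to $u$ in $T_a$; summed over all $u$ this is $O(n)$ arc inspections, each doing $O(1)$ work (following one corresponding arc in $G_b$, a membership test in $V(T_b)$ using a boolean array, and a queue operation). The cotree loop iterates over the $O(dn)$ arcs of $A(G_a)\setminus A(T_a)$, each handled in $O(1)$. The only potentially expensive step is building and returning $\kappa(W)$, but this happens at most once and costs $O(n)$ as noted. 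Hence the overall running time is $O(dn)$.

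The main obstacle, and the only point that needs genuine care rather than bookkeeping, is the correctness argument: one must show that \emph{whenever} a check in the algorithm fails, the walk $W$ it constructs really does yield a word $\kappa(W)$ with $I^{\kappa(W)}(G_a,v_0)\neq I^{\kappa(W)}(G_b,w_0)$ — i.e. that it is closed from $v_0$ in $G_a$ and open from $w_0$ in $G_b$ — and, conversely, that if all checks succeed then $\mathcal{D}$ is a well-defined bijection and the induced arc map is a color-isomorphism. Both directions follow from the loop invariant above together with the argument in the proof of Theorem~\ref{prop:iso}, which already isolates exactly these two failure modes (an open walk whose image is closed, and a closed walk whose image is open); the algorithm is simply the constructive skeleton of that proof instrumented to detect and report the failure.
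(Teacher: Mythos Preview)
Your overall approach matches the paper's: correctness is deferred to the construction in Theorem~\ref{prop:iso}, and the $O(dn)$ time and space follow from the BFS structure together with the observation that the returned walk $W$ (built at most once) has length $O(n)$. Your time and space accounting is correct and more explicit than the paper's brief justification.

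However, your detailed correctness argument contains a genuine error: you have the roles of $G_a$ and $G_b$ reversed when you argue that $\kappa(W)$ is a distinguishing word. In both failure branches (lines~13--15 and 18--20) the algorithm explicitly constructs $W$ as a \emph{closed} walk in $G_b$; hence $W(\kappa(W),w_0)=W$ is closed and $I^{\kappa(W)}(G_b,w_0)={\tt C}$. Your parenthetical ``since $T_b$ is a tree and $f$ would create a cycle'' is precisely the reason the walk \emph{is} closed in $G_b$, not open. What must be shown is that $W(\kappa(W),v_0)$ is \emph{open} in $G_a$. This does hold: writing $\kappa(W)=\kappa_1\cdot c(e)^{\pm1}\cdot\kappa_2$ where $\kappa_1$ traces the $T_a$-path $v_0\to u$, the arc $e$ goes $u\to v$, and $\kappa_2$ traces the $T_a$-path $v''\to v_0$ with $\mathcal{D}[v'']=v'=\mathrm{ter}(f)$, the walk in $G_a$ ends at $v^{a_{\kappa_2}}$, which equals $v_0=(v'')^{a_{\kappa_2}}$ only if $v=v''$ (as $a_{\kappa_2}$ is a bijection). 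In the tree phase $v\neq v''$ because $\mathcal{D}[v]=\mathrm{null}$ while $\mathcal{D}[v'']=v'$; in the cotree phase $\mathrm{ter}(e)\neq v''$ because $\mathcal{D}$ is by then a bijection and $\mathcal{D}[\mathrm{ter}(e)]\neq v'=\mathcal{D}[v'']$. So the walk in $G_a$ is open, and $\kappa(W)$ is indeed distinguishing---but with the closed/open labels opposite to what you wrote. Your justification ``it runs up $T_a$ and back'' for closedness in $G_a$ is not valid: the second leg follows $\kappa_2$ starting from $v$, not from $v''$, so it does not return to $v_0$.
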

\begin{proof}
The correctness of the algorithm follows directly from the remarks above and Theorem~\ref{prop:iso}. 

The closed walk $W$ in lines 14 or 19 is constructed at most once, and this can be done  in time $O(n)$ using $O(n)$ space. Finally, since the algorithm is based on breadth-first search its total time is  $O(dn)$, while the space used is also $O(dn)$. 
\end{proof}

 If  $v_0$ are $w_0$ indistinguishable we have found a color-isomorphism. Otherwise, 
we get a distinguishable word $\kappa$ such that the vertex invariant $I^\kappa$ induces non-trivial partitions 
$$\mathcal{P}_{I^\kappa}(G_a) = \{O_\kappa(V(G_a)), C_\kappa(V(G_a))\},  \quad \mathcal{P}_{I^\kappa}(G_b)  = \{O_\kappa(V(G_b)),C_\kappa(V(G_b))\}.$$ 
If   $|O_\kappa(V(G_a))| \neq |O_\kappa(V(G_b))|$, then the partitions $\mathcal{P}_{I^\kappa}(G_a)$ and $\mathcal{P}_{I^\kappa}(G_b)$ are not compatible. Hence   $G_a$ and $G_b$ are not color-isomorphic by (i) of  Lemma~\ref{compatible}. Consequently,  no indistinguishable vertices exist. 

Otherwise, by (ii) of  Lemma~\ref{compatible} we can search for indistinguishable vertices either between $O_\kappa(V(G_a))$ and $O_\kappa(V(G_b))$ or between $C_\kappa(V(G_a))$ and $C_\kappa(V(G_b))$. 
We re-apply algorithm  {\sc Indistinguishable} 
on  the smaller cells. This reduces  the number of possible candidates   from $n$  to at most $n/2$. 

Algorithm {\sc ColorIsomorphic} formally describes this process.   In the $i$-th iteration, it uses the function {\sc Partition} to split the sets $V_i(G_a)$ and $V_i(G_b)$  with respect  to the word $\kappa_{i}$. This is done by traversing the walk $W(\kappa_{i}, v)$ in
$G$, where  either $G$ is $G_a$ and $v \in V_i(G_a)$, or $G$ is $G_b$ and  $v \in V_i(G_b)$.


\begin{algorithm}[h!]
\SetNlSty{<text>}{}{:}
\NoCaptionOfAlgo
\SetKwComment{Comment}{}{}
\SetCommentSty{<text>}
\SetArgSty{text}
\KwIn{Strongly connected permutation digraphs $G_a$ and $G_b$ each of degree $d$ with $n$ vertices. }
\KwOut{{\sc true}, if $G_a$ and $G_b$ are color-isomorphic, {\sc false} otherwise.}
$V_1(G_a) \coloneqq V(G_a)$,  $V_1(G_b) \coloneqq V(G_b)$,  $i \coloneqq 1$\;
\Repeat{$\kappa_{i-1} = \epsilon$ or $|V_{i}(G_a)| \neq |V_{i}(G_b)|$}{
Select  some $v_i$ from $V_i(G_a)$ and some $w_i$ from $V_i(G_b)$\;
$\kappa_{i} \coloneqq $ {\sc Indistinguishable}\,($G_a, G_b, v_{i}, w_{i}$)\;
\If{$\kappa_{i} \neq \epsilon$}{
$C_{\kappa_{i}}(V_i(G_a))$, $O_{\kappa_{i}}(V_i(G_a)) \coloneqq $  {\sc Partition}\,($V_i(G_a), \kappa_{i}$)\;
$C_{\kappa_{i}}(V_i(G_b))$, $O_{\kappa_{i}}(V_i(G_b)) \coloneqq $  {\sc Partition}\,($V_i(G_b), \kappa_{i}$)\;
\eIf{$|C_{\kappa_{i}}(V_i(G_a))| \leq |O_{\kappa_{i}}(V_i(G_a))|$}{
$V_{i+1}(G_a)  \coloneqq C_{\kappa_{i}}(V_i(G_a)),
V_{i+1}(G_b)  \coloneqq C_{\kappa_{i}}(V_i(G_b))$\;
}
{
$V_{i+1}(G_a)  \coloneqq O_{\kappa_{i}}(V_i(G_a)),
V_{i+1}(G_b)  \coloneqq O_{\kappa_{i}}(V_i(G_b))$\;

}
}
Increase $i$ to $i+1$\;
}
\eIf{$\kappa_i = \epsilon$}{
\Return {\sc true}\;
}
{
\Return {\sc false}\;
}

\caption{{\bf Algorithm } {\sc  ColorIsomorphic}($G_a, G_b$)}
\end{algorithm}

\begin{theorem}
Let  $G_a$ and $G_b$ be strongly connected permutation digraphs on $n$ vertices,  each of degree $d$. The algorithm {\sc  ColorIsomorphic}$(G_a, G_b)$ correctly tests  in $O(n^{2} + dn \log n)$  time  and $O(dn)$ space whether $G_a$ and $G_b$ are color-isomorphic.    
\end{theorem}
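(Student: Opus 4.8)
The plan is to analyze Algorithm {\sc ColorIsomorphic} by combining the correctness argument already built up in the text with a careful accounting of the work done in each iteration of the {\sc Repeat} loop. Correctness: By Theorem~\ref{prop:iso}, $G_a$ and $G_b$ are color-isomorphic (via a map sending $v_0$ to $w_0$) iff $v_0$ and $w_0$ are indistinguishable, and Proposition~\ref{prop:Indistinguishable} guarantees that {\sc Indistinguishable} either certifies this (returning $\epsilon$) or returns a genuine distinguishing word $\kappa_i$. When $\kappa_i \neq \epsilon$, the vertex invariant $I^{\kappa_i}$ splits the current candidate cells $V_i(G_a)$ and $V_i(G_b)$ into their ${\tt C}$- and ${\tt O}$-parts; by Lemma~\ref{compatible}(i) incompatibility of the induced partitions (detected as $|V_{i+1}(G_a)| \neq |V_{i+1}(G_b)|$, or more precisely a size mismatch of the two parts) proves non-isomorphism, and by Lemma~\ref{compatible}(ii) it suffices thereafter to search for indistinguishable vertices within matched cells. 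Always recursing into the smaller of the two parts is justified because, by the same lemma, if any color-isomorphism exists it maps $V_{i+1}(G_a)$ into $V_{i+1}(G_b)$; and picking an arbitrary representative $v_{i+1}$ from $V_{i+1}(G_a)$ loses nothing, since either some $w \in V_{i+1}(G_b)$ is indistinguishable from it or none is, which by Lemma~\ref{compatible}(ii) settles the whole problem. So the loop terminates with $\kappa_i = \epsilon$ precisely when the graphs are color-isomorphic.

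For the running time, the key quantitative point is that each iteration that does not terminate strictly halves (at least) the size of the candidate set: $|V_{i+1}(G_a)| \le |V_i(G_a)|/2$, because we keep the smaller of ${\tt C}$ and ${\tt O}$. Hence the number of iterations is $O(\log n)$. In each iteration we call {\sc Indistinguishable}, which costs $O(dn)$ time and $O(dn)$ space by Proposition~\ref{prop:Indistinguishable}; over all $O(\log n)$ iterations this contributes $O(dn\log n)$. We also call {\sc Partition} twice per iteration: for each vertex $v$ in the current cell of $G_a$ (resp.\ $G_b$) we traverse the walk $W(\kappa_i, v)$ to decide whether it is closed, at cost $O(|\kappa_i|)$ per vertex. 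Since $|\kappa_i| = O(n)$ (the distinguishing word returned by {\sc Indistinguishable} is a closed walk formed from two tree paths plus one arc in a tree on $n$ vertices) and the cell has at most $n$ vertices, one call to {\sc Partition} costs $O(n^2)$ in the worst case; but summing over iterations the cell sizes are $n, n/2, n/4, \ldots$, so the total is $O\!\bigl(\sum_{i\ge 0} (n/2^i)\cdot n\bigr) = O(n^2)$. Adding the two contributions gives $O(n^2 + dn\log n)$ time. Space is dominated by storing the digraphs and the $O(dn)$ working space of {\sc Indistinguishable}, i.e.\ $O(dn)$; the words $\kappa_i$ and the partitions take only $O(n)$ additional space and can be discarded between iterations.

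The main obstacle to watch is the amortized bound on the {\sc Partition} calls: the naive estimate $O(\log n)$ iterations each costing $O(n^2)$ would give $O(n^2\log n)$, which is too weak. The fix is exactly the geometric-decay observation above — because we always descend into the smaller cell, the cell sizes form a geometric series and the $O(n^2)$ bound holds even summed over all iterations, independently of how long each individual word $\kappa_i$ is (as long as $|\kappa_i| = O(n)$, which is ensured by constructing $\kappa_i$ from tree paths). A secondary point to verify carefully is that $|\kappa_i| = O(n)$ really does hold for the word produced in lines~14 and~19 of {\sc Indistinguishable}: each such word is the label of a closed walk consisting of a root-to-$\mathrm{ini}(f)$ path in the tree $T_b$, the arc $f$, and a $\mathrm{ter}(f)$-to-root path in $T_b$, hence has length at most $2n+1$. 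With these two observations in hand the theorem follows.
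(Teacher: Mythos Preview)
Your proposal is correct and follows essentially the same approach as the paper: both rely on Theorem~\ref{prop:iso}, Proposition~\ref{prop:Indistinguishable}, and Lemma~\ref{compatible} for correctness, and both obtain the $O(n^2)$ term via the geometric decay of the cell sizes combined with the $O(n)$ bound on $|\kappa_i|$. The paper packages the correctness argument slightly more formally by introducing an explicit composed vertex invariant $\mathcal{I}_{i+1}(G,v)=\mathcal{I}_i(G,v)\cdot I^{\kappa_i}(G,v)$ and proving a two-part loop invariant (that $1\le |V_i(G_a)|\le n/2^{i-1}$ and that $V_i(G_a),V_i(G_b)$ are same-labelled cells of $\mathcal{P}_{\mathcal{I}_i}$), which is exactly what underlies your appeals to Lemma~\ref{compatible}.
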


\begin{proof}

We first show the following loop invariant.

 At the start of each iteration of the repeat loop at lines 2-13 the following holds:
\begin{enumerate}
\item[(i)]  $1 \leq |V_i(G_a)|  \leq {n} /\, {2^{i-1}}$, and
\item[(ii)] $V_i(G_a)$ and $V_i(G_b)$ are cells of the partitions of $V(G_a)$ and $V(G_b)$, respectively, where both partitions are induced by a vertex invariant $\mathcal{I}_{i}$ such that the vertices in $V_i(G_a)$ and  the vertices in $V_i(G_b)$ have the same labels.
\end{enumerate}

Prior to the first iteration, we have $i=1$, $|V_1(G_a)|  = n$. Thus, property (i) of the invariant holds.    By taking $\mathcal{I}_1 = I^{\epsilon}$, the partitions $\mathcal{P}_{\mathcal{I}_1}(G_a)$ and $\mathcal{P}_{\mathcal{I}_1}(G_b)$  are trivial, and so $V_1(G_a) \in \mathcal{P}_{\mathcal{I}_1}(G_a), V_1(G_b) \in \mathcal{P}_{\mathcal{I}_1}(G_b)$, and the vertices in $V_i(G_a)$ and the vertices in $V_i(G_b)$ have the same labels. Thus,  property (ii)  holds as well.

To see that each iteration maintains the loop invariant, we inductively assume that before the  iteration  $i$,  properties (i) and (ii) hold and that  at the end of  iteration $i$ the condition in line 13 is false. Consequently,  $\kappa_{i} \neq \epsilon$ and $|V_{i+1}(G_a)| = |V_{i+1}(G_b)|$. 
Without loss of generality we may assume that $V_{i+1}(G_a)  = C_{\kappa_{i}}(V_i(G_a))$ and $V_{i+1}(G_b)  = C_{\kappa_{i}}(V_i(G_b))$.

First, since $\kappa_{i}$  is a distinguishable word for some $v_i\in V_i(G_a)$ and some  $w_i\in V_i(G_b)$, at least one of the cells$C_{\kappa_{i}}(V_i(G_a))$ and $C_{\kappa_{i}}(V_i(G_b))$ is nonempty.  But then, since $|C_{\kappa_{i}}(V_i(G_a))| = | C_{\kappa_{i}}(V_i(G_b))|$,  both  $V_{i+1}(G_a)$ and  $V_{i+1}(G_b)$ are nonempty.  Moreover, since  $|V_i(G_a)|  \leq {n}/\,{2^{i-1}}$, and since 
  $ |V_{i+1}(G_a)| \leq |V_{i}(G_a)|/\,2$, it follows that $ |V_{i+1}(G_a)| \leq {n}/\,{2^{i}}$, and the first part of the invariant holds.

To prove (ii), let $\mathcal{I}_{i+1}\colon  \mathcal{G} \to \{{\tt C}, {\tt O}\}^{i+1}$ be a function defined by $
\mathcal{I}_{i+1}(G,v) = \mathcal{I}_i(G,v)\cdot I^{\kappa_{i}}(G,v). 
$
Note that because $\mathcal{I}_i$ and $I^{\kappa_{i}}$ are vertex invariants,  so is $\mathcal{I}_{i+1}$. Since  $V_i(G_a) \in \mathcal{P}_{\mathcal{I}_i}(G_a)$ and   $V_i(G_b)\in \mathcal{P}_{\mathcal{I}_i}(G_b)$, and since $V_{i+1}(G_a)$ and   $V_{i+1}(G_b)$ are nonempty,  it follows that
$$
V_{i+1}(G_a) = C_{\kappa_{i}}(V_i(G_a)) \in \mathcal{P}_{\mathcal{I}_{i+1}}(G_a) \textrm{ and }
 V_{i+1}(G_b) = C_{\kappa_{i}}(V_i(G_b)) \in \mathcal{P}_{\mathcal{I}_{i+1}}(G_b).
$$
Furthermore,   $\mathcal{I}_{i+1}(G_a,v) = \mathcal{I}_i(G_a,v)\cdot {\tt C}$ for all $v\in V_{i+1}(G_a)$, while $\mathcal{I}_{i+1}(G_b,w) = \mathcal{I}_i(G_b,w)\cdot {\tt C}$ for all $w\in V_{i+1}(G_b)$. But $\mathcal{I}_i(G_a, v) = \mathcal{I}_i(G_b, w)$ for all $v\in V_i(G_a)$ and  $w\in V_i(G_b)$, and so $\mathcal{I}_{i+1}(G_a,v) = \mathcal{I}_{i+1}(G_b,w)$ for all $v\in V_{i+1}(G_a)$ and  $w\in V_{i+1}(G_b)$. Thus, the  second part of the invariant  holds.

Next,  by property (i), it follows that the algorithm  terminates  in at most $\lfloor\log n \rfloor +1$ iterations. 
Furthermore, we show that on termination  the algorithm returns {\sc true} if and only if the digraphs are color-isomorphic.
We first show that, if $G_a$ and $G_b$ are color-isomorphic, the algorithm returns {\sc true}. Suppose for the purpose of deriving a contradiction that at the end of some iteration, say $i$, a word $\kappa_{i}$ is  nonempty  and $|V_{i+1}(G_a)| \neq |V_{i+1}(G_b)|$. Then, by  property (ii), 
$V_i(G_a)\in \mathcal{P}_{\mathcal{I}_{i}}(G_a)$ and $V_i(G_b) \in \mathcal{P}_{\mathcal{I}_{i}}(G_b)$, where $\mathcal{I}_{i}$ is a  vertex invariant such that vertices in $V_i(G_a)$ and vertices in $V_i(G_b)$ have the same labels. Consider the vertex invariant $\mathcal{I}_{i+1}$ as defined above. Because $|V_{i+1}(G_a)| \neq |V_{i+1}(G_b)|$, it follows that the partitions  $\mathcal{P}_{\mathcal{I}_{i+1}}(G_a)$ and $\mathcal{P}_{\mathcal{I}_{i+1}}(G_b)$ are not compatible.
 However, since $G_a$ and $G_b$ are assumed to be color-isomorphic, we have by (i) of  Lemma~\ref{compatible} that  $\mathcal{P}_{\mathcal{I}_{i+1}}(G_a)$ and $\mathcal{P}_{\mathcal{I}_{i+1}}(G_b)$ must be  compatible,  which is a contradiction. Conversely, if the algorithm returns {\sc true}, then  at the end of some iteration, say $i$,  a word $\kappa_i$ must be  empty, so there exists a vertex in $G_a$ which is indistinguishable from some vertex in $G_b$. By Theorem~\ref{prop:iso}, $G_a$ and $G_b$ are color-isomorphic. This completes the proof of correctness of the algorithm.

Finally, we compute the  complexity of the algorithm. The initialization in line 1  requires  $O(n)$  time.  From the above analysis it follows that the repeat loop  is executed  at most $\lfloor\log n \rfloor +1$ times. Lines 3 and 12  require $O(1)$  time, while line 4 requires $O(dn)$ time by Proposition~\ref{prop:Indistinguishable}. Let $T_{\sc P}(i)$ denote the time of the  {\sc Partition} to split up $V_i(G_a)$ and $V_i(G_b)$ in lines 6-7 with respect to $\kappa_i$.  Then, we can express the   total time of the algorithm as being bounded from above by
\begin{align}\label{time_naive}
\sum_{i=1}^{\lfloor\log n \rfloor +1} O(dn +T_{\sc P}(i)) & = O\bigg(\sum_{i=1}^{\lfloor\log n \rfloor +1}  (dn + T_{\sc P}(i)) \bigg)\\
& = O\bigg(dn \log n  +  \sum_{i=0}^{\lfloor\log n \rfloor +1}  T_{\sc P}(i)\bigg)\nonumber.
 \end{align}
Since  {\sc Partition} requires time proportional to the product of the length of a word times the number of vertices in a cell we are splitting, and since, at each iteration $i$,   we have $|\kappa_i| = O(n)$   and  $|V_i(G_a)| = |V_i(G_b)| \leq  n /\,{2^{i-1}}$, it holds that  $T_{\sc P}(i) = O(n^2 /\,{2^{i-1}})$. It follows that
\begin{align*}
\sum_{i=0}^{\lfloor\log n \rfloor +1}  T_{\sc P}(i)  & = \sum_{i=0}^{\lfloor\log n \rfloor +1}  O(n^2 /\,{2^{i-1}}) \\
 & =  O\bigg( \sum_{i=0}^{\lfloor\log n \rfloor +1}  n^2 /\,{2^{i-1}}\bigg) = O\bigg( n^2 \sum_{i=0}^{\lfloor\log n \rfloor +1}  \frac{1}{2^{i}} \bigg).
\end{align*}
Since $\sum_{i=0}^{\infty}  1/\,{2^{i}} = 2$, we can bound the running time of the algorithm as
$$
O\bigg(dn \log n  + n^2 \sum_{i=0}^{\lfloor\log n \rfloor +1}  \frac{1}{2^{i}}\bigg) =   O\bigg(dn \log n  + n^2 \sum_{i=0}^{\infty}  \frac{1}{2^{i}}\bigg) = O(n^{2} + dn \log n).
$$
The space complexity of the algorithm is clearly  $O(dn)$. This completes the proof.
\end{proof}


\section{Towards a subquadratic time algorithm}\label{sec:shorter}

The running time of  the algorithm {\sc ColorIsomorphic} is dominated by  the running time of the function {\sc Partition}; in turn, this depends on the length of a distinguishing word $\kappa$. Since   the output sets $O_\kappa(V(G_a))$  and $C_\kappa(V(G_a))$ of  {\sc Partition}$(V(G_a), \kappa)$ are the support set and the fixed points set of the permutation $a_{\kappa}$, respectively, we consider the following a bit more general problem:

\begin{quote}\it
Given a finite  sequence  $a_1, a_2,\ldots, a_d$ of $d$ permutations in $S_n$ and a  word  
$\kappa = k_1k_2\cdots k_{m}$ over $[d]$, can we evaluate   the product $a_{\kappa} = a_{k_1}a_{k_2}\cdots a_{k_{m}}$ on $s$ points in  time  $o(sm)$?
\end{quote}

An especially interesting case for us is when   $m, s = \Theta(n)$.  In this case,  we show that, at a given $d$,  the product $a_{\kappa}$ can be evaluated 
in time  $o(n^2)$. This reduces the running time of  the algorithm {\sc ColorIsomorphic}   to  subquadratic in $n$.

Our approach is similar to the  fast computation of large positive integer powers $\beta^m$ by repeated squaring. 
For technical reasons, we describe a procedure of how  to  make $\kappa$ always of even length. Namely,  if $|\kappa|$ is odd we expand it by a single character $*\notin \Sigma$ making its length even. Moreover, defining $a_{*}$ to be the identity permutation of $[n]$, the new word  $\kappa$ defines the same product $a_{k_1}a_{k_2}\cdots a_{k_{m}}$. 

Let $\mathcal{S}_0 = \{a_1,a_2,\ldots, a_d\}$,    $\kappa_0 = \kappa$, $k_{0, j} = k_j$, $j\in [m]$, and  $\Sigma_0 = [d]$. 
In case  $|\kappa_0|$ is odd, we use the above technical procedure making its length even.
In the  next step we scan through the word $\kappa_0$, replacing a pair $k_{0, 2j-1}k_{0, 2j}$ by $k_{1, j}$, $j\in  [\lc m/2\rc]$.  The obtained word $\kappa_1 = k_{1 ,1} k_{1 ,2} \cdots k_{1 ,\lc m/2\rc}$ is actually built over $\Sigma_1 \cup \{k_{1, \lc m/2\rc}\}$, where $\Sigma_1=  \{ ij \,|\,  i,j\in \Sigma_0\}$.  Clearly,  $|\kappa_1| = \frac{1}{2} |\kappa_0| $ and $a_{\kappa_0} = a_{\kappa_1}$. So assuming that  the set $\mathcal{S}_1 = \{ a_{ij}  = a_ia_j \,|\, a_i, a_j\in \mathcal{S}_0\}$  of all products of pairs of permutations in
 $\mathcal{S}_0$ is precomputed and the permutation $a_{k_{1, \lc m/2\rc}}$  is known,  the  time of straightforward evaluation of $a_{\kappa}$   is reduced  to half.

 The above reduction step can be repeated. If before the $t$-th iteration the length   $|\kappa_{t-1}|$ is odd, we apply the above technical procedure making $|\kappa_{t-1}|$ even. 
Then, after the $t$-th iteration we have the set  $\Sigma_t = \{ ij \,|\,  i,j\in \Sigma_{t-1}\}$ and
the word 
$$\kappa_t = k_{t, 1}k_{t, 2}\cdots k_{t, \ell},$$
where
  $k_{t, j}= k_{t-1, 2j-1} k_{t-1, 2j}$ for $j\in 
 [\ell]$, and $\kappa_t$ is over the alphabet  $\Sigma_t \cup \{k_{t, \ell} \}$.  We leave to the reader to check that $\ell = \lc m/2^t\rc$. Again, with  
$\mathcal{S}_t = \{ a_{ij}  = a_ia_j \,|\,  a_i, a_j \in \mathcal{S}_{t-1}\}$ and the permutation 
$a_{k_{t, \ell}}$ being precomputed,  straightforward evaluation of the product $a_{\kappa_t}$ takes $O(n \cdot m/2^t)$.

At this point, we truncate the iteration similarly as it is the
recursion in \cite{BrodnikMunro}.
Namely, since  $|\mathcal{S}_t | =  d^{2^t}$, the construction of $\mathcal{S}_t$ from $\mathcal{S}_{t-1}$ takes $O(n \cdot d^{2^t})$.  Consequently, we  iterate the described process for $\nu$ times until
 \begin{equation}\label{ineq}
 d^{\,2^{\nu}}\leq \frac{m}{2^{\nu}}.
\end{equation}
%
Observe that  inequality (\ref{ineq})  has no closed form solution for  ${\nu}$. However, by rewriting it  as $2^{\nu} \leq \log_d m  - {\nu} \log_d 2 $, we see that $2^{\nu} \leq \log_d m $. 
Finally, we define  $\nu$ to be the integer satisfying
%
 \begin{equation}\label{nu}
\frac{1}{4} \log_d m  < 2^{\nu} \leq  \frac{1}{2} \log_d m.
\end{equation}
Therefore, after ${\nu}$ iterations, the length of the word $\kappa_{\nu}$  becomes  bounded from above by
$$|\kappa_{\nu}| = \frac{m}{2^{\nu}}< \frac{4m }{ \log_d m } = 4\cdot \frac{m \log d}{ \log m} = O(  m \log d / \log m).$$

Based on this analysis, we give a formal description of this reduction in the algorithm {\sc WordReduction}. This algorithm  guarantees that the size of the word is $O(  m \log d / \log m)$, and simultaneously increases the set of permutations.
\begin{algorithm}[h!]
\SetNlSty{<text>}{}{:}
\NoCaptionOfAlgo
\SetKwComment{Comment}{}{}
\SetCommentSty{<text>}
\SetArgSty{text}
\KwIn{A set $\mathcal{S}_0 = \{a_1, a_2,\ldots, a_d\}$ of permutations of $[n]$, a word $\kappa_0 = k_{0, 1}k_{0, 2}\cdots k_{0, m}$ over $ [d]$.}
\KwOut{a set of permutations and a word of length $O(m \log d / \log m)$.}
\eIf{$m < d^{\,4 }$}
{\Return {$\kappa_0$, $\mathcal{S}_0$}\; }
{Let $\nu$ be a positive integer satisfying $1 /\,4 \log_d m  < 2^{\nu} \leq  1 /\,2 \log_d m$\;
\For{$ t = 1$ \KwTo $\nu$}
{
$\ell \coloneqq |\kappa_{t-1}|$\;
\If{$\ell$ is odd}{
$\kappa_{t-1} \coloneqq \kappa_{t-1} *$\;
}
$\kappa_{t} \coloneqq k_{t, 1}k_{t, 2}\cdots k_{t, \lc \ell/2\rc}$ where $k_{t, j} \coloneqq k_{t-1, 2j-1} k_{t-1, 2j}$, $j\in [\lc \ell/2\rc]$\;
$\mathcal{S}_t \coloneqq \{ a_{ij} \,|\, a_{ij}  = a_ia_j, a_i, a_j\in \mathcal{S}_{t-1}\}$\;
$a_{k_{t, \lc \ell/2\rc}} \coloneqq a_{k_{t-1, 2\lc \ell/2\rc- 1 }} a_{k_{t-1, 2\lc \ell/2\rc}}$\;
}
{\Return {$\mathcal{S}_{\nu} \cup \{a_{\kappa_{\nu},|\kappa_{\nu}|} \}$, $\kappa_{\nu}$}\; }
}

\caption{{\bf Algorithm } {\sc  WordReduction}($\mathcal{S}_{0}$, $\kappa_0$)}
\end{algorithm}

\begin{lemma}\label{reduction:ana}
Let  $\mathcal{S}_{0} = \{a_1, a_2, \ldots, a_d\}$ be a set of permutations of $[n]$, and let  $\kappa_0 = k_1k_2\cdots k_m$ be a word  over $[d]$. The algorithm  {\sc  WordReduction}$(\mathcal{S}_{0}$, $\kappa_0)$ takes $O(m + n\sqrt{m})$ time and $O(m + n\sqrt{m})$ space.
\end{lemma}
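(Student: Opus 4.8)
The plan is to track two quantities through the $\nu$ iterations of the for-loop in {\sc WordReduction}: the length of the current word $\kappa_t$, and the size of the current permutation set $\mathcal S_t$. At iteration $t$ we spend $O(|\kappa_{t-1}|)$ time scanning the word and producing $\kappa_t$ of half the length, plus $O(n\cdot d^{2^t})$ time building $\mathcal S_t$ from $\mathcal S_{t-1}$ (there are $d^{2^t}$ products to form, each costing $O(n)$), plus $O(n)$ for the single special permutation $a_{k_{t,\lceil\ell/2\rceil}}$. So the total time is
\begin{equation}\label{eq:reduction-time}
O\!\left(\sum_{t=1}^{\nu}\big(|\kappa_{t-1}| + n\, d^{2^t} + n\big)\right).
\end{equation}

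First I would bound the word-scanning part: since $|\kappa_{t-1}| \le m/2^{t-1} + 1$ (the $+1$ from the occasional padding character), the sum $\sum_{t=1}^{\nu}|\kappa_{t-1}|$ is $O(m + \nu) = O(m)$, using $\nu = O(\log\log m) = O(m)$. Next, the dominant term is the construction of the $\mathcal S_t$. The key point is that $d^{2^t}$ is increasing in $t$, so $\sum_{t=1}^{\nu} d^{2^t} = O(d^{2^\nu})$ (the last term dominates a geometric-like sum, since $d^{2^t}$ at least doubles each step for $d\ge 2$). By the defining inequality \eqref{nu}, $2^\nu \le \tfrac12\log_d m$, hence $d^{2^\nu} \le d^{(\log_d m)/2} = \sqrt{m}$. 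Therefore $\sum_{t=1}^{\nu} n\,d^{2^t} = O(n\sqrt m)$, and the $O(n)$ terms contribute $O(n\nu) = O(n\log\log m) = O(n\sqrt m)$ as well. Plugging these into \eqref{eq:reduction-time} gives total time $O(m + n\sqrt m)$. The early-exit branch when $m < d^4$ returns immediately in $O(m)$ time (or $O(1)$, depending on how the word is passed), which is absorbed into the bound.

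For space, the argument is essentially the same accounting: at any moment we store the current word, of length $O(m)$, and at most the two consecutive sets $\mathcal S_{t-1}, \mathcal S_t$; the largest is $\mathcal S_\nu$, which has $d^{2^\nu} \le \sqrt m$ permutations of $[n]$, so $O(n\sqrt m)$ cells. (One may keep only $\mathcal S_\nu$ at the end, but even keeping a constant number of consecutive sets during the loop is within budget since $\sum$ of sizes is again $O(d^{2^\nu}) = O(\sqrt m)$.) Hence the space is $O(m + n\sqrt m)$ as claimed.

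The one spot needing a little care — and the main obstacle — is justifying $\sum_{t=1}^{\nu} d^{2^t} = O(d^{2^\nu})$ cleanly, i.e. that the tower of exponents makes the last term dominate the whole sum up to a constant factor. For $d \ge 2$ we have $d^{2^t}/d^{2^{t-1}} = d^{2^{t-1}} \ge 2$, so consecutive ratios are at least $2$ and the sum is bounded by $2\,d^{2^\nu}$; the degenerate possibility $d \le 1$ does not arise since we have $d \ge 1$ permutations and the reduction is only invoked when there is something to reduce. After that, everything is routine substitution of the bound $2^\nu \le \tfrac12\log_d m$ from \eqref{nu}.
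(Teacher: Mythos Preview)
Your proof is correct and follows essentially the same approach as the paper: bound the word-scanning cost by a geometric sum in $m$, bound $\sum_{t=1}^{\nu} d^{2^t}$ by a constant times its last term $d^{2^{\nu}}$, and then invoke $2^{\nu}\le \tfrac12\log_d m$ to get $d^{2^{\nu}}\le \sqrt m$. The paper phrases the domination step as $\sum_{t=1}^{\nu} d^{2^t}\le \sum_{t\ge 0} d^{2^{\nu}}/2^t = 2d^{2^{\nu}}$, which is exactly your ratio-$\ge 2$ argument written the other way around.
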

\begin{proof}
We may assume that $m \geq d^{\,4}$ as otherwise we are done. 
Lines 6-8 take $O(1)$ time. Since $|\kappa_{t-1}| = \lc m/\, 2^{t- 1} \rc$, lines 8 and 9 take
$O( m/\, 2^{t- 1} )$ time.
Obviously,  the  construction of $\mathcal{S}_t$ from $\mathcal{S}_{t-1}$ in line  10 takes  $O(nd^{\,2^t})$ time, while line 11 takes $O(n)$ time.
So, we can express the   total running time of the algorithm as being bounded from above by
$$
\sum_{t=1}^{\nu} O(m/2^{t-1} + nd^{\,2^t})  = O\bigg(m + n \sum_{t=1}^{\nu} d^{\,2^t}  \bigg). 
$$
The last summation can be estimated as
$$ \sum_{t=1}^{\nu} d^{\,2^t}  \leq  \sum_{t=0}^{\infty} \frac{d^{\,2^{\nu}}}{2^t}  = 2 d^{\,2^{\nu}}   < 2 d^{ \,\log_d \sqrt{m}}  = 2 \sqrt{m}.$$
Thus, the algorithm takes  $O(m + n \sqrt{m})$ time.
Moreover, the space complexity is proportional  to the sum of the length of the input word and the size  of the set
$\mathcal{S}_{\nu}$. However,   $\mathcal{S}_{\nu}$ has  $d^{\,2^{\nu}} = O(\sqrt{m})$ permutations of length $n$, and the result follows.
\end{proof}

Once the length of the word is guaranteed to be $O(m \log d / \log m)$, we evaluate its corresponding product  on $s$ points in a straightforward manner.

\begin{lemma}\label{lemma:reduce}
Let $a_1, a_2, \ldots, a_d$ be permutations of  $[n]$ and let $k_1k_2\cdots k_m$ be a word  over  $[d]$. Then we can evaluate the product $a_{\kappa_1}a_{\kappa_2}\ldots a_{\kappa_m}$ on $s$ points 
 in $O(m + n\sqrt{m} + s\min\{m, m  \log d / \log m\})$ time and $O(m + n\sqrt{m})$ space.
\end{lemma}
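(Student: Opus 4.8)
The plan is to split the evaluation into two phases and to pick whichever of the two word-handling strategies is cheaper, depending on how $m$ compares with $d$. First, if $m < d^{\,4}$, then $m \log d / \log m = \Theta(m)$, so there is nothing to gain from the reduction: we simply evaluate the product $a_{k_1}a_{k_2}\cdots a_{k_m}$ directly, pushing each of the $s$ query points through the $m$ permutations in turn, at a cost of $O(sm) = O(s \min\{m, m\log d/\log m\})$ time and $O(m+n)$ space (the permutations are part of the input). This already matches the claimed bound in this regime, since then $n\sqrt m = O(n\sqrt m)$ is subsumed and $O(m+n\sqrt m)$ space suffices.

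Otherwise $m \geq d^{\,4}$, and here the strategy is to call {\sc WordReduction}$(\mathcal{S}_0,\kappa_0)$ first. By Lemma~\ref{reduction:ana} this takes $O(m + n\sqrt m)$ time and $O(m + n\sqrt m)$ space, and it returns a word $\kappa_\nu$ of length $|\kappa_\nu| = O(m\log d/\log m)$ together with an enlarged set $\mathcal{S}_\nu \cup \{a_{\kappa_\nu,|\kappa_\nu|}\}$ of permutations of $[n]$, each of which can be looked up in $O(1)$ time, such that $a_{\kappa_\nu} = a_{\kappa_0}$ as permutations. Then I evaluate $a_{\kappa_\nu}$ on the $s$ given points in the straightforward way: for each of the $s$ points, follow it through the $|\kappa_\nu|$ factors, each application being an $O(1)$ array lookup. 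This costs $O(s\,|\kappa_\nu|) = O(s\, m\log d/\log m)$ time and no extra space beyond what {\sc WordReduction} already allocated. Adding the two phases gives total time $O(m + n\sqrt m + s\, m\log d/\log m)$ and space $O(m + n\sqrt m)$, and combining with the $m<d^4$ case yields the stated $O(m + n\sqrt m + s\min\{m, m\log d/\log m\})$ bound.

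The only point that needs a little care — and the closest thing here to an obstacle — is bookkeeping the symbol representation so that the final straight-line evaluation really is $O(1)$ per factor. The reduced word $\kappa_\nu$ is built over a compound alphabet $\Sigma_\nu \cup \{k_{\nu,\ell}\}$, so I must argue that {\sc WordReduction} has, as a byproduct, materialized an array indexed by the symbols occurring in $\kappa_\nu$ whose entries are the corresponding permutations of $[n]$ (this is exactly $\mathcal{S}_\nu$ together with the extra symbol $k_{\nu,\ell}$ for the possibly-odd tail), and that each such symbol can be used as an array index in $O(1)$ time under the $b$-bit RAM model assumed in Section~\ref{sec:prelim} — here one uses $d^{\,2^\nu} = O(\sqrt m) = O(2^b)$ so the compound symbols fit in a machine word. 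Once that is in place the running-time accounting is the routine addition above; I would state it in one or two lines and cite Lemmas~\ref{reduction:ana} for the preprocessing cost and the elementary identity $a_{\kappa_0} = a_{\kappa_\nu}$ established in the discussion preceding the algorithm.
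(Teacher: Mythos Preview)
Your proposal is correct and follows essentially the same two-phase approach as the paper: first invoke {\sc WordReduction} (Lemma~\ref{reduction:ana}) to shorten the word, then evaluate the shortened word directly on the $s$ points. Your explicit case split on $m < d^{\,4}$ versus $m \geq d^{\,4}$ and the discussion of symbol indexing in the RAM model are reasonable elaborations, but the paper's own proof is much terser---it simply cites Lemma~\ref{reduction:ana} for the reduction cost and states the evaluation cost as $O(s\min\{m, m\log d/\log m\})$ without spelling out the cases or the bookkeeping.
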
 
\begin{proof}
The evaluation is done in two phases. First, by Lemma~\ref{reduction:ana} we reduce the length of the word  to  $O(m \log d / \log m)$ in  $O(m + n\sqrt{m})$ time and $O(m + n\sqrt{m})$ space. And second,  we evaluate the obtained word  in  $O(s\min\{m, m  \log d / \log m\})$ time.
\end{proof}

\begin{remark}\label{rem:partial}
Lemma~\ref{lemma:reduce} partially answers the question we posed at the beginning of this section. The evaluation  can be done in time $o(sm)$ whenever $s= \Theta(n)$ and $\log d= o(\log m)$.
\end{remark}

We are now ready to prove  Theorem~\ref{thm:main}.

\begin{proof}[Proof of Theorem~\ref{thm:main}]
Given $d$-tuples $a$ and $b$ of permutations  that generate transitive subgroups of the symmetric group $S_n$, the tuple $a$ is  simultaneously  conjugate to tuple $b$ if and only if the permutation digraph $G_a$
 is color-isomorphic  to the permutation digraph $G_b$. To this end, we modify the basic method in  the algorithm  {\sc ColorIsomorphic}  by speeding-up the bottleneck which is {\sc Partition}.

Instead of using  {\sc Partition}  to partition the sets $V_i(G_a)$ and $V_i(G_b)$ of sizes $s= O(n / 2^{i-1})$ with respect to a distinguishing word $\kappa_{i+1}$ of length $m= O(n)$, we can compute, by  Lemma~\ref{lemma:reduce}, the cells $C_{\kappa_{i+1}}(V_i(G_a))$, $O_{\kappa_{i+1}}(V_i(G_a))$, $C_{\kappa_{i+1}}(V_i(G_b))$, and $O_{\kappa_{i+1}}(V_i(G_b))$ in  
\begin{equation}\label{eq:time-bound}
O\bigg(n + n^{3/2} + n^2  \frac{\log d}{2^{i-1} \log n}\bigg) 
\end{equation}
 time  and  $O(n^{3/2} )$ space. Then, by replacing $T_{\sc P}(i)$ in equation (\ref{time_naive}) with (\ref{eq:time-bound}) 
 we can bound the running time of the algorithm to
$$O\left(dn \log n  +   n^{3/2} \log n + n^2  \frac{\log d}{ \log n}\right) = O\bigg( dn\log n + n^2  \frac{\log d}{ \log n}\bigg).$$

In addition to $O(dn)$ space required by  the basic method, we need an extra   $O(n^{3/2} )$ space to compute the cells
 and the result follows.
\end{proof}

\section{Strongly subquadratic time algorithm}\label{sec:storngly}
If we could guarantee that a spanning tree $T_a$ ($a= (a_1, a_2, \ldots, a_d)$) in line 1 of  {\sc Indistinguishable} has many arcs of the same color, then any distinguishing word arising from such a tree would contain many occurrences of some character. Consequently, the  evaluation of the corresponding permutation can be done even more efficiently than in the previous section.

To achieve this,  let  $\lambda_i$ 
 be the number of cycles in  the cycle decomposition of $a_i$, and let us define $\lambda = \min_{i\in [d]} \lambda_i$. 
We present a construction of $T_a$ which for $\lambda = O(n^{\epsilon})$, where $0 \leq \epsilon < 1$, results in  a  strongly subquadratic  running time in $n$ of the algorithm {\sc ColorIsomorphic}.

Let  $a_j$ be a permutation with  $\lambda_j = \lambda$ cycles, and let $C_1, C_2,\ldots, C_{\lambda}$ be their corresponding subdigraphs in $G_a$. We construct $T_a$ in two steps: 
we first construct a weakly connected subdigraph $S_a$ of $G_a$, and then we find the required spanning tree $T_a$ as a  spanning tree in $S_a$. More precisely, $V(S_a) = V(G_a)$ while $A(S_a)$ consists of  all arcs $A(C_i)$, where $i \in [\lambda]$, and 
 $\lambda-1$ arcs  $e_1, e_2,\ldots, e_{\lambda-1}$ (of colors different from $j$)  such that  $S_a$ is weakly connected.
This can be achieved in $O(dn)$ time by slightly modifying a breadth first search on $G_a$ such that, as soon as a vertex in $C_i$ is visited, all other vertices in $C_i$ are marked as visited as well.  With $S_a$ in hand, we construct a subdigraph $T_a$  by deleting an arbitrary arc from each  $A(C_i)$. The above construction  is summarized in the following lemma.

\begin{lemma}\label{tree-con}
With the notation and assumptions of this section, we can compute a spanning tree  with  $\lambda - 1$ arcs of colors different from $j$ in $O(dn)$ time and $O(dn)$ space.
\end{lemma}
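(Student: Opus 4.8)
The plan is to make precise the two-step construction already sketched in the paragraph preceding the lemma and to verify each step has the claimed complexity. First I would set up the objects: let $a_j$ be a permutation attaining the minimum number of cycles, so $\lambda_j=\lambda$, and let $C_1,C_2,\ldots,C_\lambda$ be the subdigraphs of $G_a$ induced by the cycles of $a_j$ (each $C_i$ is a directed cycle on its vertex set, using only arcs of color $j$). These $C_i$ partition $V(G_a)=[n]$, and together they contain exactly $n$ arcs of color $j$.

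The construction proceeds in two phases. In the first phase I build a weakly connected spanning subdigraph $S_a$ with $V(S_a)=V(G_a)$ and $A(S_a)=\bigl(\bigcup_{i=1}^{\lambda}A(C_i)\bigr)\cup\{e_1,\ldots,e_{\lambda-1}\}$, where the extra arcs $e_1,\ldots,e_{\lambda-1}$ have colors different from $j$ and are chosen so that $S_a$ is weakly connected. The key point is that contracting each $C_i$ to a single super-vertex turns $S_a$ into a connected multigraph on $\lambda$ super-vertices, so exactly $\lambda-1$ connecting arcs suffice; since $G_a$ is strongly connected (the tuple $a$ generates a transitive group), such arcs exist. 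To find them efficiently I run a breadth-first search on $G_a$, but modified so that whenever a vertex of some cycle $C_i$ is first reached, all vertices of $C_i$ are immediately enqueued and marked visited; an inter-cycle tree arc of the BFS is added to the $e$-list precisely when it is the first arc discovered leading out of the block of already-visited cycles into a new cycle. This visits each arc $O(1)$ times, so the phase runs in $O(dn)$ time and $O(dn)$ space (storing the adjacency structure of $G_a$ and, for each vertex, a pointer to its cycle). In the second phase I delete one arbitrary arc from each $A(C_i)$, turning every directed cycle into a directed path, and set $T_a$ to be the result.

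It remains to check that $T_a$ is genuinely a spanning tree of $G_a$ with the advertised structure. It is spanning since $V(T_a)=V(G_a)$. Counting arcs: $S_a$ has $n+(\lambda-1)$ arcs, and we delete $\lambda$ of them, leaving $n-1$ arcs — the correct count for a spanning tree on $n$ vertices. Weak connectedness is preserved because removing one arc from a directed cycle leaves a connected (di)path on the same vertex set, so the underlying undirected graph of $S_a$ stays connected after all $\lambda$ deletions. An $(n-1)$-arc weakly connected subdigraph on $n$ vertices is a tree, so $T_a$ is a spanning tree. Finally, by construction the only arcs of $T_a$ of color other than $j$ are among $e_1,\ldots,e_{\lambda-1}$, so $T_a$ has at most $\lambda-1$ arcs of colors different from $j$ (equivalently, at least $n-\lambda$ arcs of color $j$), as claimed. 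The total cost is dominated by the modified BFS, giving $O(dn)$ time and $O(dn)$ space.

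The main obstacle, and the point deserving the most care, is the correctness of the modified BFS for selecting the connecting arcs $e_1,\ldots,e_{\lambda-1}$: one must argue that marking an entire cycle $C_i$ visited the moment any of its vertices is reached does not cause the search to stall before all of $[n]$ is covered, and that it picks up exactly $\lambda-1$ inter-cycle arcs (no more, no fewer) forming a spanning tree on the contracted graph. This follows from strong connectedness of $G_a$ together with the standard BFS-tree argument applied to the contracted multigraph, but it is the step where the ``slightly modifying'' phrase hides the real content, so I would spell out the invariant that the set of visited vertices is always a union of complete cycles forming a connected block in the contracted graph.
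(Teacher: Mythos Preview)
Your proposal is correct and follows essentially the same approach as the paper: the paper gives exactly this two-step construction (modified BFS to build $S_a$ from the $j$-cycles plus $\lambda-1$ inter-cycle arcs, then delete one arc per cycle) in the paragraph preceding the lemma and states the lemma as a summary without further proof. Your write-up is in fact more detailed than the paper's, since you verify the arc count and spell out the BFS invariant; the only minor sharpening is that the $e_i$ necessarily have color $\neq j$ (every color-$j$ arc is intra-cycle), so the count is exactly $\lambda-1$, not merely at most.
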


Any distinguishing word $\kappa$  over $\Sigma = \{k^{\alpha} \,|\, k \in [d], \alpha = \pm 1\}$ arising  from  a spanning tree in Lemma~\ref{tree-con} contains at  most   $\lambda - 1$ characters different from $j^{+1}$ and $j^{-1}$, and can, therefore, be written without loss of generality as
\begin{equation}\label{eq:word}
\kappa = (j^{\alpha_1})^{p_1}\kappa_1(j^{\alpha_2})^{p_2}\kappa_2 (j^{\alpha_3})^{p_3}\cdots  \kappa_{m- 1}(j^{\alpha_m})^{p_{m}}, 
\end{equation}
where each $0\leq p_i < n$, $\alpha_i \in \{-1, +1\}$, $\kappa_i  \in  \Sigma \setminus \{j^{+1}, j^{-1}\}$, and $m \leq \lambda$. Efficient evaluation of  the powers $a_j^{p_i}$ and $(a_j^{-1})^{p_i}$  using  the idea of fast computation of large positive integer powers mentioned in the previous section, results in efficient  evaluation of the  product  corresponding to equality $(\ref{eq:word})$.

\begin{lemma}\label{lemma:eval}
Let $a_1, a_2, \ldots, a_d$ be permutations of  $[n]$, and let  $
j^{p_1}\kappa_1j^{p_2}\kappa_2\cdots  j^{p_{m}}
$ be a word over $[d]$,
where each $0\leq p_i < n$ and $\kappa_i   \in [d]\setminus \{j\}$. Given the powers $a_j^2, a_j^{2^2}, \ldots, a_j^{2^{\lfloor\textrm{log}(n)\rfloor}}$, 
 we  can evaluate  the product $a_{j}^{p_1}a_{\kappa_1}a_{j}^{p_2}a_{\kappa_2}\cdots a_{j}^{p_{m}}$ on $s$ points 
 in $O(s m \log n)$ time.
\end{lemma}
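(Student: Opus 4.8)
The plan is to evaluate the product $a_{j}^{p_1}a_{\kappa_1}a_{j}^{p_2}a_{\kappa_2}\cdots a_{j}^{p_{m}}$ point by point, using fast exponentiation of $a_j$ to avoid paying the full cost of each power $a_j^{p_i}$. First I would observe that each exponent $p_i$ satisfies $0 \le p_i < n$, so its binary expansion has at most $\lfloor \log n \rfloor + 1$ bits; writing $p_i = \sum_{\ell} b_{i,\ell}\, 2^{\ell}$ we have $a_j^{p_i} = \prod_{\ell : b_{i,\ell} = 1} a_j^{2^\ell}$, a product of at most $\lceil \log n \rceil$ of the precomputed powers $a_j^{2}, a_j^{2^2}, \ldots, a_j^{2^{\lfloor \log n\rfloor}}$ (together with $a_j = a_j^{2^0}$, which we either assume given or compute in $O(n)$ time up front).

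Next, to compute the image of a single point $x \in [n]$ under the whole product, I would apply the permutations from left to right: we maintain a current point $y$, initialised to $x$, and process the word $j^{p_1}\kappa_1 j^{p_2}\kappa_2 \cdots j^{p_m}$ one block at a time. To apply a block $a_j^{p_i}$ to $y$, we run through the bits of $p_i$ and, for each set bit $\ell$, replace $y$ by $y^{a_j^{2^\ell}}$ — a single array lookup in the precomputed permutation. This costs $O(\lceil \log n \rceil)$ per block $a_j^{p_i}$. Applying a single character $a_{\kappa_i}$ with $\kappa_i \in [d]\setminus\{j\}$ costs $O(1)$. There are at most $m$ blocks of the first kind and at most $m-1$ characters of the second kind, so evaluating the product at one point takes $O(m \log n)$ time. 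Doing this for each of the $s$ query points gives the claimed $O(s m \log n)$ total time.

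The routine bookkeeping is: the binary expansions of the $p_i$ can be extracted once in $O(m \log n)$ time (negligible compared to the $s$-fold evaluation when $s \ge 1$, and in any case absorbed in the stated bound), and the precomputed powers are given by hypothesis so they contribute nothing here. There is no real obstacle in this argument; the only point requiring a word of care is making sure the direction of composition is handled consistently — since permutations act on the right ($x \mapsto x^g$) and the word reads left to right, applying $a_{k_1}$ first and $a_{k_m}$ last is exactly the definition of $a_\kappa = a_{k_1}a_{k_2}\cdots a_{k_m}$, so iterating $y \coloneqq y^{a_{k_t}}$ in increasing order of $t$ computes $x^{a_\kappa}$ correctly. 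Thus the mildly delicate step, if any, is simply verifying that the block decomposition $j^{p_i}$ combined with the repeated-squaring lookups reproduces $y^{a_j^{p_i}}$, which follows immediately from $a_j^{p_i} = \prod_{\ell : b_{i,\ell}=1} a_j^{2^\ell}$ and associativity of the action.
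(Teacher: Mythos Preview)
Your proposal is correct and follows essentially the same approach as the paper: write each exponent $p_i$ in binary and use the precomputed powers $a_j^{2^\ell}$ to evaluate $a_j^{p_i}$ at a point in $O(\log n)$ time, giving $O(m\log n)$ per point and $O(sm\log n)$ overall. The paper's proof is a two-line sketch of exactly this idea; your version simply fills in the bookkeeping (direction of composition, handling of the $\kappa_i$ characters, bit extraction) that the paper leaves implicit.
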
 
\begin{proof}
Note that each $p_i$ can be written as $p_i = \sum_{k=0}^{\lfloor \log n\rfloor} c_k{2^k}$, where each $c_k\in \{0,1\}$. For the given  powers
$
a_j^2, a_j^{2^2}, \ldots, a_j^{2^{\lfloor \log n\rfloor}}
$
we can obviously evaluate each  $a_j^{p_i}$
at any point  in $O(\log n)$ time. The result of lemma trivially follows.
\end{proof}

We are now ready to prove  Theorem~\ref{th:strongly}, which implies  strongly subquadratic time in $n$ at a given $d$ 
 as soon as $\lambda = O(n^{\epsilon})$ for some constant $0 \leq \epsilon < 1$.  In particular, for $\lambda = O(1)$ we have the following obvious corollary to  Theorem~\ref{th:strongly}.
 
 \begin{corollary}
If $\lambda = O(1)$, then  the transitive $d$-SCP in the symmetric group $S_n$ can be solved in time $O(dn\log n)$.
\end{corollary}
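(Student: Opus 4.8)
The plan is to obtain the corollary as the immediate specialization $\lambda = O(1)$ of Theorem~\ref{th:strongly}. Recall that every permutation of $[n]$ has at least one cycle, so $\lambda \ge 1$, and the two input tuples are nonempty, so $d \ge 1$; hence if $\lambda$ is bounded by an absolute constant $c$, then $d + \lambda \le d + c \le (c+1)d$. Substituting this into the time bound $O((d+\lambda)n\log n)$ furnished by Theorem~\ref{th:strongly} gives $O(dn\log n)$, which is exactly the assertion. (Only the time bound is recorded, because the space bound $O((d+\log n)n)$ of Theorem~\ref{th:strongly} does not collapse to $O(dn)$.) This step presents no difficulty.

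Since the corollary rests entirely on Theorem~\ref{th:strongly}, whose proof is not yet in hand, let me also indicate how I would prove that theorem. First reduce the transitive $d$-SCP to deciding color-isomorphism of $G_a$ and $G_b$, exactly as in the proof of Theorem~\ref{thm:main}, and run {\sc ColorIsomorphic} with a single modification: inside {\sc Indistinguishable}, in place of an arbitrary spanning tree, use the tree from Lemma~\ref{tree-con} --- taken from a permutation $a_j$ with $\lambda_j = \lambda$ cycles by keeping all arcs of its $\lambda$ cycles minus one arc per cycle, plus $\lambda - 1$ connecting arcs of other colors, all in $O(dn)$ time. Then every closed walk that {\sc Indistinguishable} can output (a tree path, one cotree arc, a tree path) has the form \eqref{eq:word}: blocks of powers $j^{\pm p_i}$ separated by at most $\lambda-1$ off-color letters, with $m \le \lambda$. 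Precompute the $\lfloor\log n\rfloor+1$ iterated squares $a_j^{2}, a_j^{2^2}, \ldots$ (and their inverses) in $O(n\log n)$ time, and replace {\sc Partition} by evaluation via Lemma~\ref{lemma:eval}, which costs $O(sm\log n) = O(s\lambda\log n)$ on $s$ points. At iteration $i$ the cell has $s = O(n/2^{i-1})$ points, so the split costs $O((n/2^{i-1})\lambda\log n)$; summing this geometric series over the $O(\log n)$ iterations gives $O(n\lambda\log n)$, to which we add the $O(\log n)$ calls to {\sc Indistinguishable} at $O(dn)$ each, the per-iteration $O(dn)$ tree construction, and the single $O(n\log n)$ squaring pass, for a total of $O((d+\lambda)n\log n)$; space is $O(dn)$ for the digraphs plus $O(n\log n)$ for the stored powers.

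The one point that needs genuine care --- and hence the main obstacle in proving Theorem~\ref{th:strongly}, and thereby the corollary --- is the claim $m \le \lambda$: one must verify that \emph{every} word $\kappa(W)$ returned in lines 14 and 19 of {\sc Indistinguishable}, not merely those tracing tree paths, contains at most $\lambda - 1$ characters different from $j^{+1}$ and $j^{-1}$. This amounts to checking that the cotree arcs of $T_a$ are exactly the $\lambda$ deleted cycle-arcs together with $A(G_a)\setminus A(S_a)$, and that closing any one of them through $T_a$ does not introduce more than $\lambda - 1$ off-color steps. Everything past that is routine bookkeeping on geometric sums.
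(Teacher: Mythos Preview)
Your derivation of the corollary from Theorem~\ref{th:strongly} by substituting $\lambda = O(1)$ into the bound $O((d+\lambda)n\log n)$ is correct and is exactly how the paper handles it---as an immediate, unproved consequence. The additional sketch you give of Theorem~\ref{th:strongly} is unnecessary here (the paper supplies that proof immediately after the corollary) but likewise matches the paper's argument; the concern you flag about $m \le \lambda$ is harmless, since only $m = O(\lambda)$ is required for the running-time analysis.
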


\begin{proof}[Proof of Theorem~\ref{th:strongly}]


By Lemma~\ref{tree-con} we can compute a spanning tree $T_a$ with  $\lambda - 1$ arcs of colors different from $j$ in $O(dn)$ time and $O(dn)$ space. Taking such a $T_a$  in  line 1 of  {\sc Indistinguishable}, any distinguishing word arising from $T_a$ can be written as in  equality (\ref{eq:word}), which is of length $O(\lambda)$.  We then compute the inverse permutations $a_1^{-1}, a_2^{-1}, \ldots, a_d^{-1}$ and $b_1^{-1}, b_2^{-1}, \ldots, b_d^{-1}$. This takes $O(dn )$ time and $O(dn)$ space. Next, for each $c\in \{a_j, a_j^{-1}, b_j, b_j^{-1}\}$ we compute 
the powers
$$
c_j^2, c_j^{2^2}, \ldots, c_j^{2^{\lfloor\textrm{log}(n)\rfloor}} 
$$
using the standard approach, which requires  $O(n \log n)$ time and $O(n \log n)$ space. 
Then, to partition the sets $V_i(G_a)$ and $V_i(G_b)$ of sizes $s= O(n / 2^{i-1})$ with respect to a distinguishing word $\kappa_{i+1}$ (of length $O(\lambda)$) 
we compute the cells $C_{\kappa_{i+1}}(V_i(G_a))$ and  $O_{\kappa_{i+1}}(V_i(G_a))$, as well as   $C_{\kappa_{i+1}}(V_i(G_b))$ and $O_{\kappa_{i+1}}(V_i(G_b))$,  by the procedure described in  Lemma~\ref{lemma:eval} -- instead of by using  {\sc Partition} in   {\sc ColorIsomorphic}. This can be done  in $O(\lambda n \log n/\,2^{i-1})$ time, which is now, in fact, $T_{\sc P}(i)$ in equality (\ref{time_naive}). Consequently,  taking into account the $O(n \log n)$ time for precomputing the powers  $c_j^{2^k}$
 we can bound the running time of the algorithm by
$$ O(dn \log n  +  \lambda n \log n) + O(n \log n) = O((d + \lambda) n  \log n),$$
as claimed.

In addition to $O(dn)$ space required by  the basic algorithm, we need an extra   $O(n (\log n  +d))$ space for storing the inverse permutations and powers,
which proves the space bound and completes the proof.

\end{proof}

\section{Linear time algorithm}\label{linear}
In this section, we consider a special case when there is a permutation which is an $n$-cycle, that is, $\lambda = 1$. Here we take a completely different approach not based on a distinguishing word. Rather, to each permutation digraph, we assign a special string in such a way that the permutation digraphs are color isomorphic if and only if the corresponding strings are cyclically equivalent. This equivalence is then checked by the linear time Knuth-Morris-Pratt string-matching algorithm.

Let $G_a$ be a permutation digraph of degree $d$ with $n$ vertices so that some permutation, say $a_j$, is an $n$-cycle. Without loss of generality we may assume that the vertices are labeled in  such a way that  $a_j(i) = i \bmod  n + 1$, $i\in [n]$, and we consider $a_j$ as a reference cycle. To $G_a$,  we assign a string over $\{0, 1, \ldots, n\}$ in three steps as follows. First, we label the arcs of $G_a$: each arc $e$ with $c(e) \neq j$ is labeled  by $\sigma^j(e)$ that indicates the number of color-$j$ arcs along the reference cycle $a_j$ from its initial vertex $\textrm{ini}(e)$ to its terminal vertex $\textrm{ter}(e)$; while  arcs  $e$ with $c(e) = j$ are labeled by $\sigma^j(e) = n$. More formally, for each arc $e= (i, a_k)$, 
$$
 \sigma^j(i, a_k)=\left\{
  \begin{array}{@{}ll@{}}
    (a_k(i) - i)  \bmod n, & \text{if}\  k \neq j\\
    n, & \text{otherwise.}
  \end{array}\right.
$$  
Second,  using these arc-labels we label each vertex  $i$ of $G_a$  by  the string
$$\Delta^j_i(G_a)= \sigma^j(i, a_1)\cdot \sigma^j(i, a_2) \cdots \sigma^j(i, a_d),$$
which is a concatenation of labels of all arcs out-going from $i$, ordered by color.
Finally, we encode $G_a$ as a string $\mathcal{S}^j(G_a)= \Delta^j_1(G_a)\cdot \Delta^j_2(G_a) \cdots \Delta^j_n(G_a)$.

\begin{proposition}\label{prop:equiv}
Let $G_a$ and $G_b$ be  permutation digraphs on $n$ vertices each of degree $d$ where, for some $j\in [n]$, the permutations $a_j$ and $b_j$ are $n$-cycles. Then there exists a color-isomorphism of $G_a$ onto $G_b$ if and only if the corresponding strings $\mathcal{S}^j(G_a)$ and $\mathcal{S}^j(G_b)$ are cyclically equivalent.
\end{proposition}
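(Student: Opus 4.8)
The plan is to prove the two directions of the equivalence directly from the definitions of the arc-labels $\sigma^j$, the vertex-labels $\Delta^j_i$, and the string $\mathcal{S}^j$. The key observation is that, once we normalise so that $a_j$ and $b_j$ act as $i\mapsto i\bmod n+1$ on $[n]$, a color-isomorphism $(\phi_V,\phi_A)\colon G_a\to G_b$ must send the reference cycle of $a_j$ to the reference cycle of $b_j$, hence $\phi_V$ is forced to be a rotation $i\mapsto i+r\pmod n$ for some fixed shift $r$; conversely any rotation of $[n]$ that happens to preserve all the other arc data is a color-isomorphism. The arc-label $\sigma^j(e)$ is precisely the "length along the reference cycle" from $\mathrm{ini}(e)$ to $\mathrm{ter}(e)$, and this quantity is visibly \emph{rotation-invariant}: if $e=(i,a_k)$ with $k\neq j$ then $\sigma^j(e)=(i^{a_k}-i)\bmod n$ depends only on the displacement, not on $i$ itself. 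So the shift $r$ exactly corresponds to cyclically rotating the block-string $\Delta^j_1\Delta^j_2\cdots\Delta^j_n$ by $r$ blocks.

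First I would set up the normalisation: relabel the vertices of $G_a$ (resp.\ $G_b$) so that $a_j$ (resp.\ $b_j$) is the standard $n$-cycle; this relabelling is itself a color-isomorphism onto the renamed digraph, so it changes neither side of the claimed equivalence (for the "$\mathcal{S}^j$ is cyclically equivalent" side this uses that $\mathcal{S}^j$ is defined \emph{after} such a normalisation, so it is well-defined only up to the choice of starting vertex on the reference cycle, i.e.\ up to a cyclic rotation of the $n$ blocks $\Delta^j_i$ — I would make this dependence explicit and note it is exactly what "cyclically equivalent" absorbs). Next, for the forward direction, suppose $(\phi_V,\phi_A)$ is a color-isomorphism. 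Because $\phi_A$ preserves colors and end-vertices, it carries the unique color-$j$ arc out of vertex $i$ to the unique color-$j$ arc out of $\phi_V(i)$; since both color-$j$ subdigraphs are single $n$-cycles $1\to 2\to\cdots\to n\to 1$, the map $\phi_V$ commutes with the successor function, hence $\phi_V(i)=i^{a_j^{\,r}}= (i+r-1\bmod n)+1$ for the fixed $r$ with $\phi_V(1)$ equal to that value. Then for every color $k\neq j$ and every $i$, the arc $(i,a_k)$ maps to $(\phi_V(i),b_k)$, and comparing "displacements along the reference cycle" gives $\sigma^j(i,a_k)=\sigma^j(\phi_V(i),b_k)$; reading this off the definition of $\Delta^j$ yields $\Delta^j_i(G_a)=\Delta^j_{\phi_V(i)}(G_b)$ for all $i$, which says exactly that $\mathcal{S}^j(G_b)$ is the cyclic rotation of $\mathcal{S}^j(G_a)$ by $r$ blocks, i.e.\ the two strings are cyclically equivalent (here I would remark that the block lengths are all equal to $d$ integer-entries, so a blockwise cyclic shift is genuinely a cyclic shift of the underlying string and "cyclically equivalent as length-$d$ blocks" coincides with "cyclically equivalent as strings").

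For the converse, suppose $\mathcal{S}^j(G_a)$ and $\mathcal{S}^j(G_b)$ are cyclically equivalent. Since $\mathcal{S}^j$ is a concatenation of $n$ blocks each of the same length $d$, and each block $\Delta^j_i$ begins (at its color-$j$ position) with the sentinel symbol $n$ which by construction occurs nowhere else, the only possible cyclic shifts aligning the two strings are shifts by a whole number $r$ of blocks; fix such an $r$ and define $\phi_V(i)=(i+r-1\bmod n)+1$. Then $\Delta^j_i(G_a)=\Delta^j_{\phi_V(i)}(G_b)$ for all $i$, so for each color $k$ we have $\sigma^j(i,a_k)=\sigma^j(\phi_V(i),b_k)$. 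For $k=j$ this is automatic; for $k\neq j$ it says $(i^{a_k}-i)\bmod n=(\phi_V(i)^{b_k}-\phi_V(i))\bmod n$, and since $\phi_V$ itself shifts by $r$, this forces $\phi_V(i^{a_k})=\phi_V(i)^{b_k}$. Thus $\phi_V$ conjugates each $a_k$ to $b_k$, so defining $\phi_A(i,a_k)=(\phi_V(i),b_k)$ gives a color-isomorphism $G_a\to G_b$. The main obstacle, and the only point requiring care, is the claim that a cyclic equivalence of the two strings is necessarily block-aligned: I would handle this by building the sentinel value $n$ into the argument (the entry $\sigma^j(e)=n$ occurs exactly once per block, namely at the color-$j$ coordinate, and never as a value of $\sigma^j$ on a non-$j$ arc since there such a value is a residue mod $n$, hence in $\{0,1,\dots,n-1\}$), which pins down the block boundaries and rules out "fractional" shifts; everything else is a direct unwinding of the definitions.
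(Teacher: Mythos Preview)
Your proposal is correct and follows essentially the same approach as the paper: normalise so that $a_j,b_j$ are the standard $n$-cycle, observe that a color-isomorphism must then be a cyclic rotation of $[n]$, use the rotation-invariance of the displacement labels $\sigma^j$ to get the forward direction, and for the converse use the sentinel value $n$ (which appears exactly at the color-$j$ slot of each block and nowhere else) to force any cyclic equivalence to be block-aligned before reading off the rotation as the desired isomorphism. The paper's proof is terser---it simply asserts the cyclic shift in the forward direction and cites the positions of the character $n$ for block alignment in the converse---while you spell out each of these points explicitly, but the argument is the same.
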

\begin{proof}
Suppose first that there exists a color-isomorphism $f \colon G_a \to G_b$. Let $x= \Delta^j_1(G_a)\cdot \Delta^j_2(G_a) \cdots \Delta^j_{f(1) -1}(G_a)$ and $y= \Delta^j_{f(1)}(G_a)\cdot \Delta^j_{f(1) + 1}(G_a) \cdots \Delta^j_{n}(G_a)$. It is easy to see that $\mathcal{S}^j(G_b) = yx$, and so  $\mathcal{S}^j(G_a)$ and $\mathcal{S}^j(G_b)$ are cyclically equivalent.

Conversely,  suppose that the strings $\mathcal{S}^j(G_a)$ and $\mathcal{S}^j(G_b)$ are cyclically equivalent. By definition, in both strings the character $n$ occurs precisely $n$ times at positions $j, j + n,\ldots, j + (d-1)n.$ So, there must exists $k\in [n]$ such that $x= \Delta^j_1(G_a)\cdot \Delta^j_2(G_a) \cdots \Delta^j_{k -1}(G_a)$,  $y= \Delta^j_{k}(G_a)\cdot \Delta^j_{k + 1}(G_a) \cdots \Delta^j_{n}(G_a)$ and $\mathcal{S}^j(G_b) = yx$. It is straightforward to check that the mapping $f\colon V(G_a) \to V(G_b)$, defined by 
$f(i) =   (i + k - 2) \bmod n + 1$ extends to a color-isomorphism of $G_a$ onto $G_b$.
This completes the proof.
\end{proof}

We are now ready to prove Theorem~\ref{th:linear}.


\begin{proof}[Proof of Theorem~\ref{th:linear}]
First, we can construct the corresponding strings $\mathcal{S}^j(G_a)$ and $\mathcal{S}^j(G_b)$ of length $dn$ in $O(dn)$ time and $O(dn)$ space. Next, by Proposition~\ref{prop:equiv}, testing $G_a$ and $G_b$ for color isomorphism is equivalent to testing 
whether $\mathcal{S}^j(G_a)$ and $\mathcal{S}^j(G_b)$ are cyclically equivalent. In turn, this is equivalent to  asking whether $\mathcal{S}^j(G_b)$ is a substring of the string $\mathcal{S}^j(G_a)\cdot \mathcal{S}^j(G_a)$. Finally, the last problem can be solved  in $O(dn)$ time and $O(dn)$ space using the Knuth-Morris-Pratt string-matching algorithm \cite{KMP77}, which concludes the proof.
\end{proof}

\section{Empirical results}\label{sec:empir}

In empirical evaluation we compared three algorithms. The quadratic one is the algorithm described in \cite{Fontet, Hoff82}, the subqaudratic one is the algorithm {\sc ColorIsomorphic} with {\sc WordReduction} improvement, and the linear one is the algorithm described in Section~\ref{linear}. All algorithms have been implemented in the system {\sc GAP} \cite{gap}. A breadth-first search has been used to construct a spanning tree in line 1 of {\sc Indistinguishable}.

We performed two experiments: in the first experiment we compared the subquadratic algorithm with the quadratic algorithm in the general transitive case, while in the second experiment we compared the linear algorithm with the subquadratic one in a special case when the first permutation in each tuple was an $n$-cycle.


The tests were conducted on isomorphic and non-isomorphic pairs of randomly generated tuples. 
For isomorphic pairs in the first experiment, the first tuple  was obtained by repeatedly selecting   random elements of $S_n$ until a transitive subgroup was generated. The second tuple, isomorphic to the first one, was obtained by conjugation with a random permutation. For isomorphic pairs in the second experiment, the first tuple  was obtained by repeatedly selecting 
 $\log n$ random elements of $S_n$ in  such a way that the  first element was an $n$-cycle. The isomorphic tuple was then generated in the same way as in the first experiment. 
 
For non-isomorphic pairs in the first experiment, we initially generated a random transitive tuple $(a_1,a_2,\ldots, a_d)$  in such a way that $a_1^2 \neq 1$. Next, we randomly chose a permutation $\tau$ such that $\tau a_1^2 \neq a_1^2 \tau$, and constructed the tuples   $(a_1,a_2,\ldots, a_d, a_1^2)$ and  $(\tau^{-1}a_1\tau, \tau^{-1}a_2\tau ,\ldots, \tau^{-1}a_d\tau, a_1^2)$.  It is not hard to check that such pair of tuples is non-isomorphic. Non-isomorphic pairs in the second experiment were generated in the same way as in the first experiment with   $a_1$ being an $n$-cycle and $d= \log n$.

The results of the first experiment for  sizes of $n$ varying  between 10\;000 and 50\;000 are presented in
Table~\ref{tab:1}, while the results of the second experiment for sizes of $n$ varying  between
100\;000 and 500\;000 are shown in Table~\ref{tab:2}.
Runtimes, given in miliseconds, were measured by {\sc GAP}'s  {\sc NanosecondsSinceEpoch} function on a Macbook Pro with 2,9 GHz Intel Core i5   processor under macOS Sierra version 10.12.6.

 Table~\ref{tab:1} shows that  the subquadratic algorithm is a clear winner of the comparison. Moreover, its running time in practice is much faster than our worst-case estimate. The main reason for this is that the running time is dominated by the length of   distinguishing words, and this in turn depends on the depth of the breadth-first search tree constructed  in line 1 of {\sc Indistinguishable}, which, in practice,  is usually $O(\log n)$ with a modest constant \cite{Seress}. 

\begin{table}[ht!]
\footnotesize
\centering
\begin{threeparttable}
\captionof{table}{Experimental results  in miliseconds on random instances  for the general transitive case.}
\label{tab:1}
\begin{tabular}{@{}lrrrcrr@{}}\toprule
  &\phantom{} & \multicolumn{2}{c}{isomorphic pairs} & \phantom{}& \multicolumn{2}{c}{non-isomorphic pairs}  \\
\cmidrule{3-4} \cmidrule{6-7} 
$n$ && subquadratic & quadratic &  & subquadratic &  quadratic   \\ \midrule
10\;000 & & 46 &     10\;578   && 13 & 17\;865\\ 
15\;000 & & 69 &    82     && 10 & 39\;571    \\
20\;000 & & 111 &    18\;398 &&    12    & 67\;958   \\
25\;000 && 147   & 63\;750    && 14    & 136\;404    \\
30\;000 && 283    & 14\;378   && 16    & 196\;014   \\
35\;000 && 294    & 69\;366   && 18   & 265\;281    \\
40\;000 && 378 &    28\;471  &&    24    & 347\;392 \\
45\;000 && 478 &   161\;269 &&    88    & 390\;613 \\
50\;000 && 556 &    105\;583  &&    27    & 438\;985 \\
 \noalign{\smallskip}
\bottomrule
\end{tabular}                   
\end{threeparttable}
\end{table}

The results in Table~\ref{tab:2} show that the linear algorithm outperforms the subquadratic one on isomorphic pairs while on non-isomorphic pairs  the opposite happens. The main reason for this is that, in contrast with the linear  algorithm  the subquadratic algorithm does not necessary scan the entire input to find that given tuples are non-isomorphic.

\begin{table}[ht!]
\footnotesize
\centering
\begin{threeparttable}
\captionof{table}{Experimental results  in miliseconds on random instances for a special case with an $n$-cycle.}
\label{tab:2}
\begin{tabular}{@{}lrrrcrr@{}}\toprule
  &\phantom{} & \multicolumn{2}{c}{isomorphic pairs} & \phantom{}& \multicolumn{2}{c}{non-isomorphic pairs}  \\
\cmidrule{3-4} \cmidrule{6-7} 
$n$ && linear & subquadratic &  & linear &  subquadratic   \\ \midrule
100\;000 & & 2\;079 &     5\;818   && 2\;927 & 106\\ 
150\;000 & & 4\;674 &    11\;581   && 3\;457 & 124    \\
200\;000 & & 4\;157 &    18\;666 &&    4\;888    & 1\;877  \\
250\;000 && 7\;089  & 29\;622   && 6\;424    & 191   \\
300\;000 && 8\;879    & 40\;810  && 7\;805  & 255  \\
350\;000 && 8\;042    & 63\;686   && 9\;382  & 634    \\
400\;000 && 10\;942 &    75\;584  &&    10\;810   & 355 \\
450\;000 && 13\;308 &   93\;833 &&    12\;495   & 634 \\
500\;000 && 12\;697 &    119\;640  &&    14\;585   & 429 \\
 \noalign{\smallskip}
\bottomrule
\end{tabular}                   
\end{threeparttable}
\end{table}

\section{Concluding remarks}\label{sec:conc}

We have shown that the $d$-SCP in $S_n$ can be solved in  $O(n^2 \log d / \log n + dn\log n)$ worst-case time for the case when the two given tuples of permutations  generate transitive groups. On the other hand,  our experimental results on random instances suggest the following conjecture.
\begin{conjecture}
If  the permutations of transitive tuples are chosen uniformly at random,  the expected running time of the algorithm {\sc ColorIsomorphic} with {\sc WordReduction} improvement is nearly-linear in $n$ at given $d$. 
\end{conjecture}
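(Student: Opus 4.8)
The plan is to trace the running time of {\sc ColorIsomorphic} with the {\sc WordReduction} improvement back to a single random quantity---the length of the distinguishing words returned by {\sc Indistinguishable}---and to show that this quantity is polylogarithmic in $n$ with overwhelming probability. Recall from (\ref{time_naive}) that the total time is $O(dn\log n + \sum_i T_{\sc P}(i))$, where $T_{\sc P}(i)$ is the cost of splitting the cells $V_i(G_a), V_i(G_b)$ of size at most $n/2^{i-1}$ with respect to $\kappa_i$. By Lemma~\ref{lemma:reduce}, at a fixed $d$ this cost is $O(|\kappa_i| + n\sqrt{|\kappa_i|} + |V_i|\cdot|\kappa_i|)$. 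Hence if every $|\kappa_i|$ is $\mathrm{polylog}(n)$, then, because the cell sizes $|V_i|$ form a geometric series summing to $O(n)$ and there are only $O(\log n)$ iterations, $\sum_i T_{\sc P}(i) = O(n\,\mathrm{polylog}(n))$. The first step is therefore to reduce the conjecture to the claim that $\mathbb{E}[\max_i |\kappa_i|] = \mathrm{polylog}(n)$.

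The second step bounds the word length structurally and deterministically, in terms of graph distances. By construction (lines 14 and 19 of {\sc Indistinguishable}) a distinguishing word is the word of a closed walk formed by a tree path from the root $w_0$ to $\mathrm{ini}(f)$, a single arc $f$, and a tree path from $\mathrm{ter}(f)$ back to $w_0$; its length is thus at most $2\,\mathrm{depth}(T_a)+1$, since tree-path lengths in $T_b$ equal those in $T_a$ by the color-preserving correspondence. As $T_a$ is a breadth-first search spanning tree rooted at $v_0$, its depth equals the eccentricity of $v_0$ in $G_a$ and so is bounded by the diameter of $G_a$. It therefore suffices to control the diameter of the random permutation digraphs $G_a$ and $G_b$.

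The third step invokes random-graph theory. The digraph $G_a$ built from $d$ uniformly random permutations is the union of $d$ independent random permutation digraphs, that is, a random $d$-in-, $d$-out-regular digraph in the permutation model; viewed with arcs undirected it is a random $2d$-regular multigraph. A first-moment and branching-process estimate shows that the radius-$r$ ball about any fixed vertex has size $(1+o(1))d^{\,r}$ until it saturates, so the diameter is $(1+o(1))\log_d n = O(\log n/\log d)$ with probability $1 - n^{-\omega(1)}$. Conditioning on transitivity of the tuple is harmless, since for $d\ge 2$ a random tuple generates a transitive group with probability $1 - o(1)$. Combining the three steps, on this high-probability event every $|\kappa_i|$ is $O(\log_d n)$ and {\sc WordReduction} leaves such short words essentially unchanged, giving total time $O(dn\log n + n\,\mathrm{polylog}(n))$; on the complementary event we fall back on the worst-case bound of Theorem~\ref{thm:main}, whose contribution to the expectation is $n^{-\omega(1)}\cdot O(n^2\log d/\log n) = o(1)$.

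The hardest part will be making the random-digraph analysis rigorous in the permutation model and, more delicately, handling the \emph{adaptivity} of the algorithm: the roots $v_i, w_i$ and the cotree arc $f$ that first exposes a discrepancy are chosen as the partition refines, so the walks whose lengths we must bound are not independent of the already-revealed randomness. A clean route is to prove one event---that \emph{every} vertex of $G_a$ and of $G_b$ simultaneously has eccentricity $O(\log_d n)$---which holds with probability $1 - n^{-\omega(1)}$ and caps the length of any distinguishing word the algorithm could ever return, irrespective of the adaptive choices. Establishing this uniform eccentricity bound, and verifying that the same bound governs each breadth-first search performed on a refined cell, is the crux on which the conjecture rests.
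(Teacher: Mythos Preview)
The statement you are attempting to prove is listed in the paper as a \emph{conjecture}; the paper offers no proof, only the heuristic remark in Section~\ref{sec:empir} that the depth of the breadth-first spanning tree in line~1 of {\sc Indistinguishable} ``is usually $O(\log n)$ with a modest constant'', together with the experimental data in Tables~\ref{tab:1} and~\ref{tab:2}. There is therefore no ``paper's own proof'' to compare against: what you have written is a proposed attack on an open problem rather than a reconstruction of a known argument.

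Your strategy is exactly the one the paper's heuristic suggests, and the deterministic reduction is sound: each distinguishing word produced by {\sc Indistinguishable} indeed has length at most $2\,\mathrm{depth}(T_a)+1$, and with $T_a$ a BFS tree this is at most $2\,\mathrm{diam}(G_a)+1$, independently of which cotree arc $f$ triggers the return. Substituting this into the cost decomposition~(\ref{time_naive}) and Lemma~\ref{lemma:reduce} yields the claimed near-linear bound on the good event. One correction: your final worry about ``each breadth-first search performed on a refined cell'' is misplaced. Every call to {\sc Indistinguishable} builds a spanning tree of the \emph{full} digraph $G_a$ (line~1), not of a subgraph induced by a cell; the refined cells only influence the choice of root $v_i$. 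Hence the single uniform event $\{\mathrm{diam}(G_a)=O(\log_d n)\}$ caps every $|\kappa_i|$ simultaneously, and the adaptivity issue is already fully handled by that event.

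The genuine gap is the probabilistic input, and here your write-up is optimistic. You assert that the diameter of the union of $d$ independent uniform permutations on $[n]$ is $(1+o(1))\log_d n$ with probability $1-n^{-\omega(1)}$. The concentration of the diameter in random regular (multi)graphs is classical, but the failure probabilities typically established are of the form $O(n^{-c})$ for some fixed $c>0$, not $n^{-\omega(1)}$; and the permutation model is not identical to the configuration model, so contiguity or a direct argument is needed. For the expectation bound you actually need only that the bad event has probability $O(n^{-1-\varepsilon})$ for some $\varepsilon>0$, since the worst-case fallback is $O(n^2\log d/\log n)$; this weaker tail is more plausible but still requires a precise statement and proof (or citation) in the permutation model, together with the check that conditioning on transitivity does not distort it. Until that step is nailed down, the argument remains a credible plan rather than a proof, which is consistent with the paper leaving the statement as a conjecture.
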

Next, to to the best of our knowledge, there are no known lower bounds for the SCP, except the trivial linear bound $\Omega(dn)$, which leaves open also the gap between the upper and the lower bound.
 
The key approach used in our solution is that, given  a sequence  $a_1, a_2,\ldots, a_d$ of permutations in $S_n$ and a  word  
$ k_1k_2\cdots k_{m}$ over $[d]$,  we evaluate the product $a_{k_1}a_{k_2}\cdots a_{k_{m}}$ on $n$ points in time 
$o(nm)$ whenever $\log d= o(\log m)$. If $d\geq  m$, the straightforward $O(n m)$-time evaluation  is optimal, however, it remains open how big $d$ can be that there still exists an $o(n m)$-time solution. 

Finally, we remark that we have recently developed a worst-case subquadratic algorithm in $n$ at given $d$ also for the general case, i.e.,  when each tuple of permutations  generates an intransitive group \cite{BMP}. 





\section*{Acknowledgment}
The authors would like to thank Ilia Ponomarenko for enlightening discussions, and to the referee for valuable suggestions.


\bibliographystyle{amsplain}

\pagebreak

\appendixtitleon
\appendixtitletocon
\begin{appendices}

\section{}\label{append}


 \begin{algorithm}[h]
\SetNlSty{<text>}{}{:}
\NoCaptionOfAlgo
\SetKwComment{Comment}{}{}
\SetCommentSty{<text>}
\SetArgSty{text}
\KwIn{a permutation digraph $G_a$ with $n$ vertices and of degree $d$ such that all cycles of a given color are of the same length.}
\KwOut{arc labels $L(e)$ of $G_a$.}
Let $C_0,C_1, \ldots, C_{p -1}$ be $t$-cycles  in a cycle decomposition of the permutation $a_1$, and  let  
$
C_i \coloneqq (v_{it +1},v_{it+2}, \ldots, v_{(i+1)t})
$ 
for each  $i = 0,1,\ldots, p-1$\;
\For{$ i = 1$ \KwTo $n$}
{$l(v_i) \coloneqq i-1$\;}
\For{$ i = 1$ \KwTo $n$}
{$L((i, a_1)) \coloneqq \langle 0, 0\rangle$\;}
\For{$ k = 2$ \KwTo $d$}
{
\For{$ i = 1$ \KwTo $n$}
{
Let $e \coloneqq (i, a_k(i))$\;
$r \coloneqq \lfloor l(i)/t \rfloor$\;
$s \coloneqq \lfloor l(a_k(i))/t \rfloor$\;
\eIf{$r=s$}
{$L(e) \coloneqq  \langle 1, (l(a_k(i)) - l(i)) \mod t \rangle$\;}
{\eIf{there is no arc from $C_r$ to $C_s$ with label $\langle 2, 0\rangle $  }
{$L(e) \coloneqq \langle 2, 0\rangle $\;}
{Suppose $(j, a_k(j))$ is the arc from $C_r$ to $C_s$ with label $\langle 2, 0\rangle $\;
$L(e) =\coloneqq \langle 2, (l(a_k(i)) - l(a_k(j)) - (l(i) - l(j)) \mod t \rangle$\;}
}
}
}
\caption{{\bf Algorithm } {\sc  ArcLabeling}($G_a$)}
\end{algorithm}

 \end{appendices}

\end{document}